\documentclass{article}
	\usepackage{amsmath,amssymb,amsthm}
\newtheorem{lemma}{Lemma}
\newtheorem{prop}{Proposition}
\newtheorem{theorem}{Theorem}
\newtheorem{coro}{Corollary}

\theoremstyle{remark}
\newtheorem{ex}{Example}
\newtheorem{rem}{Remark}

\def\states{\mathfrak S}
\def\Tr{\mathrm{Tr}\,}
\def\Ce{\mathcal C}
\def\Fe{\mathcal F}
\def\Ha{\mathcal H}
\def\Me{\mathcal M}
\def\Te{\mathcal T}
\def\Ka{\mathcal K}

\def\ptr{\mathrm{Tr}}
\def\<{\langle}
\def\>{\rangle}
\def\kk{\rangle\!\rangle}
\def\bb{\langle\!\langle}

\def\supp{\mathrm{supp}\,}
\def\Le{\mathcal L}
\def\Ae{\mathcal A}
\def\prob{\mathsf P}

\begin{document}

\title{On the convex structure of process POVMs}
\author{Anna Jen\v cov\'a\\
Mathematical Institute, Slovak Academy of Sciences\\
\v Stef\'anikova 49, 814 73 Bratislava, Slovakia\\
\texttt{jenca@mat.savba.sk}}
\date{}
\maketitle
\begin{abstract}Measurements on quantum channels are described by so-called process operator valued measures, or process POVMs. We study implementing schemes of extremal process POVMs. As it turns out, the corresponding measurement must satisfy certain extremality property, which is stronger that the usual extremality given by the convex structure. This property motivates the introduction and investigation of the $\Ae$-convex structure of POVMs, which generalizes both the usual convex  and C*-convex structure. We show that extremal points and faces of the set of process POVMs are closely related to $\Ae$-extremal points and $\Ae$-faces of POVMs, for a certain subalgebra $\Ae$. We give a characterization of $\Ae$-extremal and $\Ae$-pure POVMs in the Appendix. 
\end{abstract}

\section{Introduction and basic definitions}


Process positive operator valued measures, or process POVMs, were introduced by Ziman \cite{ziman2008PPOVM}, as a mathematical tool for description of measurements on quantum channels. Similar to the usual POVMs, which represent quantum observables, process POVMs are sequences of positive operators, or more generally $\sigma$-additive measures with values in the set of positive operators, but satisfying a different normalization condition. Independently, the same concept, called quantum 1-testers, was studied by \cite{cdp2008memory, cdp2009framework}, and also in \cite{guwa2007games}, as measuring quantum co-strategies. 

Similarly to other quantum devices, the set of all process POVMs is convex.  In many cases, the convex  structure determines the performance of the corresponding measurements, for example optimal measurements with respect to convex figures of merit are given by extremal process POVMs. On the other hand, the set of process POVMs can also be the subject of statistical inference and the convex structure plays a decisive role in discrimination tasks, see \cite{jencova2014base,pjsz2015exploring}.

Physically, any measurement on channels can be realized by applying the channel on a part of an input state $\rho$ and consequently measuring the outcome by a usual POVM $M$. The aim of the present paper is to describe the extreme points and faces of the set  of process POVMs in terms of these implementing schemes.  It is easy to see that there is a lot of such schemes for the same process POVM, but under certain minimality conditions the input state $\rho$ and measurement $M$ are unique up to a  unitary conjugation. It was shown in \cite{bdps2011extremal,jencova2012extremalityJMP} that a process POVM is extremal if and only if the POVM $M$ in such a  minimal representation is also an extremal process POVM (note that any POVM is a multiple of a process POVM, describing a channel measurement with maximally entangled input state). In the present work, we obtain  a characterization within the  set of POVMs, more precisely, in terms of its C*-convex structure.

The notion of C*-convexity of subsets of operators was introduced and studied in \cite{lopa1981some,hmp1981Cstar,famo1993Cstar, morenz1994thestructure}. Roughly speaking, instead of numbers in the interval $(0,1)$, the coefficients of a C*-convex combination are operators forming a resolution of the identity. The C*-extreme points of the set of POVMs, or more generally of the set of unital completely positive maps from a C*-algebra into the algebra $B(\Ha)$ of bounded  operators  on a finite dimensional Hilbert space, were studied in \cite{famo1997Cstar, fazh1998thestructure}. In particular, it was proved that a POVM is C*-extremal if and only if it is projection valued. We propose a natural extension of this notion, containing both C*-convexity and usual convexity, such that the coefficients of convex combinations are restricted to a given subalgebra $\Ae\subseteq B(\Ha)$. We show that extremal elements and faces of the set of process POVMs correspond to $\Ae$-extremal elements and $\Ae$-faces of the representing POVMs, for some subalgebra $\Ae$. We also describe the face generated by a process POVM, implemented by a scheme with $\Ae$-extremal POVM. In particular, we show that such a process POVM is not necessarily extremal, but any element in its convex decomposition is obtained by a choice of the input state, while keeping the same measurement.

The outline of the paper is as follows. In the rest of this section, we introduce some basic notations and definitions. In Section \ref{sec:aconv} we introduce the notion of $\Ae$-convexity and state some of its properties. All the results needed in this paper are proved similarly as in the C*-convex case, we give some of the proofs in the Appendix. A more general characterization of $\Ae$-extremal generalized states on a C*-algebra $\mathcal B$ will be given elsewhere. Process POVMs and their representing triples are discussed in Section \ref{sec:ppovm}, Section \ref{sec:extr} contains the main results.

Let $\Ha$ be a Hilbert space, with $d_\Ha:=\dim(\Ha)<\infty$. We denote the set of (bounded) linear operators on $\Ha$ by $B(\Ha)$, the set of  positive operators by $B(\Ha)^+$ and the set of density operators, that is positive operators with unit trace, by $\states(\Ha)$. 

Measurements on the system are represented by positive operator valued measures (POVMs).
In general, these are defined on a $\sigma$-algebra of measurable subsets of the set $X$ of outcomes, but we will only deal with $X=\{1,\dots,n\}$. In this case, a POVM is a collection of positive operators $M_1,\dots,M_n$ on $\Ha$, such that $\sum_iM_i=I$. If the system is in some state $\rho\in \states(\Ha)$, the probability of obtaining the outcome $i\in X$ is given by $\Tr \rho M_i$. In this way, POVMs correspond precisely to affine maps from the state space to the probability simplex  $\prob_n$.
We will denote the set of $n$-outcome POVMs by $\Me(\Ha,n)$.

\begin{rem}\label{rem:ext} Let $\Ka\subseteq \Ha$ be a subspace and let $P:\Ha\to \Ka$ be the corresponding projection. For any $M\in \Me(\Ka,n)$, we 
put $\widetilde M_i=M_i+\tfrac1n (I-P)$, then $\widetilde M\in \Me(\Ha,n)$. In this way we may, and will, identify 
$\Me(\Ka,n)$ with a subset of $\Me(\Ha,n)$.

\end{rem}

 Let $\Le(\Ha,\Ka)$ denote the set of linear maps $B(\Ha)\to B(\Ka)$  and 
 let $\Ce(\Ha,\Ka)$ be the set of channels, that is, completely positive trace preserving maps. The Choi isomorphism 
\[
C: \phi\mapsto (\phi\otimes id_\Ha)(\psi_\Ha)\in B(\Ka\otimes \Ha)
\]
maps $\Le(\Ha,\Ka)$ onto $B(\Ka\otimes \Ha)$ and the set of completely positive maps  onto $B(\Ka\otimes \Ha)^+$. Here
\[
\psi_\Ha=\sum_{i,j} |i\>\<j|\otimes |i\>\<j|
\]
for some fixed ONB $\{|i\> \}$ in $\Ha$ and $id_\Ha$ is the identity map. 

Let $|\xi\>\in \Ha\otimes \Ha_0$ be a unit vector. Then there are orthonormal 
bases $|e_1\>,\dots,|e_{d_\Ha}\>$ and $|f_1\>,\dots,|f_{d_{\Ha_0}}\>$ of $\Ha$ and $\Ha_0$, such that 
\[
|\xi\>=\sum_{i=1}^k \alpha_i |e_i\>\otimes |f_i\>,
\]
with $\alpha_i>0$, this is the Schmidt decomposition of $|\xi\>$.
The number $k\le d_\Ha\wedge d_{\Ha_0}$ of nonzero 
coefficients is called the {Schmidt rank} of $|\xi\>$ and is denoted by $SR(\xi)$. Clearly, $SR(\xi)$ is the rank of 
$\ptr_\Ha|\xi\>\<\xi|$ or $\ptr_{\Ha_0}|\xi\>\<\xi|$. If $\rho$ is a mixed state in $\states(\Ha\otimes \Ha_0)$, then its {Schmidt number}  $SN(\rho)$ is the smallest $k$ such that 
$\rho$ can be expressed as a convex combination of pure states with Schmidt rank at most  $k$.
Clearly, $\rho$ is separable if and only if $SN(\rho)=1$.

Let $\xi_1,\dots,\xi_{d_\Ha}\in \Ha_0$ be such that $|\xi\>=\sum_i |i\>\otimes |\xi_i\>$. Let $T:\Ha\to \Ha_0$ be the linear operator defined by $T|i\>=|\xi_i\>$, then $\Tr T^*T=1$, moreover, 
\[
|\xi\>=|T\kk:=\sum_i |i\>\otimes T|i\>
\]
and the map $T\mapsto |T\kk$ defines a linear isomorphism of the set of linear operators $\Ha\to \Ha_0$ 
onto $\Ha\otimes \Ha_0$. We have $SR(|T\kk)=\mathrm{rank}(T)$.

\section{The $\Ae$-convex structure of POVMs}\label{sec:aconv}

In this paragraph, we introduce the notion of $\Ae$-convexity, $\Ae$-extreme 
points and $\Ae$-faces for POVMs on $B(\Ha)$, where $\Ae$ is a C*-subalgebra in $B(\Ha)$. The results used in this paper are similar to the C*-convex case, see Example \ref{ex:cstar} below. We postpone most of the proofs to the Appendix.

Let $\Ae\subseteq B(H)$ be a subalgebra and let  $M,N\in \Me(\Ha,n)$. We say that $M$ and $N$ are {$\Ae$-equivalent}, in notation $M\sim_\Ae N$, if there is a unitary $U\in \Ae$ such that $U^*M_iU=N_i$, for all $i$. Let $M, N^1,\dots,N^k\in B(H)$, then   $M$ is an {$\Ae$-convex combination} of $N^1,\dots,N^k$ if there are some elements $X_1,\dots,X_k\in \Ae$, $\sum_j X_j^*X_j=I$, such that $M=\sum_j X_j^*N^jX_j$, that is,
\begin{equation}\label{eq:a_convex}
M_i=\sum_j  X_j^*N^j_iX_j, \qquad i=1,\dots,n. 
\end{equation}
 An  $\Ae$-convex combination is called {proper} if $X_i$ is an invertible element in $\Ae$, for all $i$.
It is clear that the set $\Me(\Ha,n)$ is  {$\Ae$-convex}, in the sense that it contains all $\Ae$-convex combinations of its elements. A POVM $M\in \Me(\Ha,n)$  is  {$\Ae$-extremal} if whenever $M$ is a proper $\Ae$-convex combination of some POVMs $N^1,\dots, N^k$, then $N^j\sim_\Ae M$.

$\Ae$-convexity is a natural extension of the following two important cases.

\begin{ex}\label{ex:conv} Let $\Ae=\mathbb CI$, then $\Ae$-convexity coincides with the usual notion of convexity. In the general context of C*-algebras, the extremal elements of the set of POVMs, regarded as (completely) positive maps from a commutative C*-algebra into $B(\Ha)$, were characterized in \cite{arveson1969I}, see also 
\cite{stormer1974positive,partha1999extremal, dlp2005classical} for different formulations of the extremality condition. In our setting, the condition can be stated as follows: let $P_i$ be the projection onto the support of $M_i$, $i=1,\dots,n$. Then $M$ is extremal if and only if the subspaces $P_i\Ha$ are weakly independent, that is,  $D_i\in B(P_i\Ha)$ and $\sum_i D_i=0$ implies $D_i=0$ for all $i$ (compare this condition with Lemmas \ref{lemma:char_extreme} and \ref{lemma:char_pure} below). One can prove exactly the same way as for the C*-extremal case \cite{famo1997Cstar} that any $\Ae$-extremal POVM is extremal.

\end{ex}

\begin{ex}\label{ex:cstar} If $\Ae=B(\Ha)$, then $\Ae$-convexity is the same as C*-convexity. In the context of C*-algebras, this notion of convexity, along with the related extremality properties, was studied in \cite{lopa1981some, hmp1981Cstar,famo1993Cstar,morenz1994thestructure}. C*-convexity for sets of generalized states, containing POVMs as a special case, was studied in \cite{fazh1998thestructure,famo1997Cstar}. In particular, it was proved that a POVM  is C*-extremal 
in $\Me(\Ha,n)$ if and only if it is projection-valued. Though it is not clear in general whether C*-extremality implies $\Ae$-extremality, Corollary \ref{coro:PVM} below shows that a projection valued measure (PVM) is $\Ae$-extremal, for any $\Ae\subseteq B(\Ha)$.

\end{ex}


We will also need the notion of an {$\Ae$-face} of $\Me(\Ha,n)$. This is defined as a subset $\Fe$ such that whenever 
$\Fe$ contains  a proper $\Ae$-convex combination of some elements $N^1,\dots,N^k\in \Me(\Ha,n)$, then also $N^j\in \Fe$ for all $j$, see 
\cite{morenz1994thestructure} for a definition of a C*-face of a C*-convex set.  Note that, just as in the case of a C*-face, an $\Ae$-face does not have to be $\Ae$-convex, or even convex. 

\begin{lemma}\label{lemma:extreme_face} Let $M\in \Me(\Ha,n)$ be $\Ae$-extremal then 
\[
\Fe_M=\{U^*MU, U\mbox{ is a unitary in } \Ae\}
\]
is a compact $\Ae$-face of $\Ce$.
\end{lemma}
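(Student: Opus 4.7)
The plan is to handle the two claims separately: compactness of $\Fe_M$ and the $\Ae$-face property. Compactness is essentially automatic in finite dimensions, so the substantive content is the face property, which will follow from the $\Ae$-extremality of $M$ after a short algebraic reduction.

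For compactness, I would observe that the unitary group $\mathcal{U}(\Ae)$ is a closed, bounded subset of the finite-dimensional space $\Ae\subseteq B(\Ha)$, hence compact. The map $\mathcal{U}(\Ae)\to \Me(\Ha,n)$, $U\mapsto U^*MU$, is continuous, and $\Fe_M$ is its image.

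For the face property, suppose $U^*MU\in\Fe_M$ is a proper $\Ae$-convex combination
\[
U^*MU=\sum_{j=1}^k X_j^*N^jX_j,\qquad \sum_j X_j^*X_j=I,
\]
with each $X_j\in\Ae$ invertible and each $N^j\in\Me(\Ha,n)$. Conjugating both sides by $U$ (and using that $U,U^*\in\Ae$) gives
\[
M=\sum_j (UX_j^*)N^j(X_jU^*)=\sum_j Y_j^*N^jY_j,\qquad Y_j:=X_jU^*\in\Ae.
\]
Each $Y_j$ is invertible (as the product of invertibles in $\Ae$), and
\[
\sum_j Y_j^*Y_j=U\Bigl(\sum_j X_j^*X_j\Bigr)U^*=UU^*=I,
\]
so this is a proper $\Ae$-convex decomposition of $M$. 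By $\Ae$-extremality of $M$, each $N^j\sim_\Ae M$, i.e., there is a unitary $V_j\in\Ae$ with $N^j=V_j^*MV_j\in\Fe_M$. This is exactly the defining property of an $\Ae$-face.

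There is really no hard step here; the only thing to watch is to verify that the rewritten combination $M=\sum_j Y_j^*N^jY_j$ still satisfies all the requirements of a proper $\Ae$-convex combination, in particular that the new coefficients $Y_j$ lie in $\Ae$, are invertible, and resolve the identity. All three follow immediately from the fact that $U\in\Ae$ is a unitary, so the reduction to $\Ae$-extremality of $M$ is clean.
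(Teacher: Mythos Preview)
Your proof is correct and matches the paper's approach exactly. The paper's own proof simply asserts that the $\Ae$-face property is ``clear'' and that compactness follows from compactness of the unitary group in $\Ae$; you have merely supplied the straightforward details behind both assertions.
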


\begin{proof} It is clear that $\Fe_A$ is an $\Ae$-face. Compactness follows from compactness of the unitary group in $\Ae$.

\end{proof}

The proofs of the lemmas below can be found in  the Appendix.

\begin{lemma}\label{lemma:simored_face} Let $\Fe\subset \Me(\Ha,n)$ be a compact $\Ae$-face. Let $M=\sum_jX_j^*N^jX_j\in \Fe$ be an $\Ae$-convex combination. Then for all $j$, there is some $L^j\in \Me(\Ha,n)$ such that 
$Q_jN^jQ_j+Q_j^\perp L^jQ_j^\perp \in \Fe$, where $Q_j$ is the range projection of $X_j$.

\end{lemma}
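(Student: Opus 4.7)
The plan is to regularize the representation $M=\sum_j X_j^*N^jX_j$ into a \emph{proper} $\Ae$-convex combination, whose components in the limit have the required form $Q_jN^jQ_j+Q_j^\perp L^jQ_j^\perp$; compactness of $\Fe$ will then produce the claimed POVM in $\Fe$.

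First I would use the polar decomposition inside $\Ae$. Because $\Ae\subseteq B(\Ha)$ is a finite-dimensional $*$-subalgebra, it is a finite von Neumann algebra, so each $X_j$ admits a polar decomposition $X_j=V_j|X_j|$ with $V_j,|X_j|\in\Ae$; the source projection $P_j=V_j^*V_j$ and the range projection $Q_j=V_jV_j^*$ therefore also lie in $\Ae$. Since $P_j\sim Q_j$ and $\Ae$ is finite, $P_j^\perp\sim Q_j^\perp$ in $\Ae$, and $V_j$ extends to a unitary $\tilde V_j\in\Ae$ with $\tilde V_jV_j^*=Q_j$ and $\tilde V_jP_j^\perp\tilde V_j^*=Q_j^\perp$. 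Setting $\hat N^j:=V_j^*N^jV_j+\tfrac{1}{n}P_j^\perp\in\Me(\Ha,n)$, the identities $|X_j|V_j^*=X_j^*$ and $|X_j|P_j^\perp=0$ yield the intermediate decomposition
\[
M=\sum_j|X_j|\,\hat N^j\,|X_j|.
\]

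Next I would perturb. For $\epsilon>0$, let $A_j^\epsilon:=\bigl((1-\epsilon)|X_j|^2+\tfrac{\epsilon}{k}I\bigr)^{1/2}\in\Ae$ (positive, invertible, with $\sum_j(A_j^\epsilon)^2=I$) and $X_j^\epsilon:=\tilde V_jA_j^\epsilon\in\Ae$, which is invertible, satisfies $\sum_j(X_j^\epsilon)^*X_j^\epsilon=I$, and converges to $X_j$. Set
\[
Y^{j,\epsilon}:=(A_j^\epsilon)^{-1}\Bigl[(1-\epsilon)|X_j|\hat N^j|X_j|+\tfrac{\epsilon}{k}M\Bigr](A_j^\epsilon)^{-1},\qquad\bar N^{j,\epsilon}:=\tilde V_jY^{j,\epsilon}\tilde V_j^*.
\]
A short computation shows $Y^{j,\epsilon}$ (and therefore $\bar N^{j,\epsilon}$) is a POVM and $\sum_j(X_j^\epsilon)^*\bar N^{j,\epsilon}X_j^\epsilon=\sum_jA_j^\epsilon Y^{j,\epsilon}A_j^\epsilon=M$. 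So $M$ is a proper $\Ae$-convex combination of the $\bar N^{j,\epsilon}$, and the $\Ae$-face property forces $\bar N^{j,\epsilon}\in\Fe$ for every $j$ and every $\epsilon>0$.

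Finally I would let $\epsilon\to 0$. A block-matrix calculation in $\Ha=P_j\Ha\oplus P_j^\perp\Ha$, using $(A_j^\epsilon)^{-1}\to|X_j|^{-1}$ on the first block and $(A_j^\epsilon)^{-1}=\sqrt{k/\epsilon}\,I$ on the second, gives $Y^{j,\epsilon}\to V_j^*N^jV_j+P_j^\perp MP_j^\perp$; conjugating by $\tilde V_j$ and using $\tilde V_jV_j^*=Q_j$ together with $\tilde V_jP_j^\perp\tilde V_j^*=Q_j^\perp$ gives $\bar N^{j,\epsilon}\to Q_jN^jQ_j+Q_j^\perp L^jQ_j^\perp$ for the POVM $L^j:=\tilde V_jP_j^\perp MP_j^\perp\tilde V_j^*+\tfrac{1}{n}Q_j\in\Me(\Ha,n)$, and compactness of $\Fe$ puts this limit in $\Fe$. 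The main obstacle is that a positive perturbation of $|X_j|^2$ naturally produces compressions by the source projection $P_j$ of $|X_j|$, while the statement requires compressions by the range projection $Q_j$ of $X_j$; absorbing the unitary extension $\tilde V_j\in\Ae$ into the coefficient $X_j^\epsilon=\tilde V_jA_j^\epsilon$ converts one into the other without leaving $\Ae$, and the availability of $\tilde V_j$ inside $\Ae$ rests on the finiteness of the von Neumann algebra $\Ae$ (so that $P_j^\perp\sim Q_j^\perp$).
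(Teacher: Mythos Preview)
Your argument is correct, and it follows a genuinely different route from the paper's. The paper first invokes Lemma~\ref{lemma:nto2} to reduce to a two-term combination $M=X_1N^1X_1+X_2N^2X_2$ with $X_1,X_2\ge 0$, then perturbs one coefficient at a time by writing $M=(\lambda X_2)N^2(\lambda X_2)+Y_\lambda^{1/2}N^\lambda Y_\lambda^{1/2}$ with $Y_\lambda=I-\lambda^2X_2^2$ invertible; if $X_2$ is also singular it iterates the trick, and finally extracts a convergent \emph{subsequence} $N^{\lambda_n}$ by compactness. The polar decomposition is applied only at the end to pass from the support projection $P_j$ of $|X_j|$ to the range projection $Q_j$. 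By contrast, you handle all $m$ terms simultaneously with the single regularization $A_j^\epsilon=((1-\epsilon)|X_j|^2+\tfrac{\epsilon}{k}I)^{1/2}$, absorb the unitary extension $\tilde V_j$ into the coefficient from the start, and compute the limit explicitly rather than appealing to sequential compactness. Your approach is cleaner in two respects: it avoids the somewhat informal ``repeat the same construction'' step when several $X_j$ are singular, and it actually identifies $L^j$ concretely (essentially $\tilde V_jM\tilde V_j^*$ compressed to $Q_j^\perp$), so only closedness of $\Fe$ is really used. The paper's approach, on the other hand, stays closer to the two-term C*-convexity arguments of \cite{famo1993Cstar} and does not need the block computation. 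Both arguments ultimately rest on the same algebraic fact---finiteness of $\Ae$ ensuring that the polar partial isometry extends to a unitary in $\Ae$---which the paper uses tacitly when it takes $U_j,V_j$ unitary in $\Ae$.
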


We next characterize the $\Ae$-extremal  POVMs.

\begin{lemma}\label{lemma:char_extreme} Let $M\in \Me(\Ha,n)$, then $M$ is $\Ae$-extremal if and only if  $0\le D_i\le M_i$, $i=1,\dots,n$ and  
$\sum_i D_i\in\Ae$ implies that  there is some $X\in \Ae$ such that  $D_i=X^*M_iX$, $i=1,\dots,n$.

\end{lemma}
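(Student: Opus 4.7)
I would prove each direction separately, by moving between the ``fragment'' condition ($0\le D_i\le M_i$ with $\sum_iD_i\in\Ae$) and explicit proper $\Ae$-convex decompositions of $M$. For the only if direction, assume $M$ is $\Ae$-extremal and set $A:=\sum_iD_i\in\Ae^+$, so $I-A\in\Ae^+$ as well. In the special case that both $A$ and $I-A$ are invertible in $\Ae$, I define POVMs $N^1_i:=A^{-1/2}D_iA^{-1/2}$ and $N^2_i:=(I-A)^{-1/2}(M_i-D_i)(I-A)^{-1/2}$, which yield the proper $\Ae$-convex decomposition $M_i = A^{1/2}N^1_iA^{1/2}+(I-A)^{1/2}N^2_i(I-A)^{1/2}$. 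Extremality then gives a unitary $U\in\Ae$ with $U^*N^1_iU=M_i$, so that $X:=U^*A^{1/2}\in\Ae$ satisfies $X^*M_iX=A^{1/2}UM_iU^*A^{1/2}=A^{1/2}N^1_iA^{1/2}=D_i$. To reach the general case I perturb, replacing $D_i$ by $D_i^\epsilon:=(1-\epsilon)D_i+\epsilon t M_i$ for a fixed $t\in(0,1)$ and small $\epsilon>0$: then $D_i^\epsilon\le M_i$ and $(1-\epsilon)A+\epsilon tI\in\Ae$ has spectrum in a closed subinterval of $(0,1)$, so the invertible case produces $X^\epsilon\in\Ae$ with $(X^\epsilon)^*X^\epsilon\le I$. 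Finite dimensionality of $\Ae$ then lets me pass to a limit $X\in\Ae$ along a subsequence $\epsilon_k\to 0$, with $D_i=X^*M_iX$ by continuity.

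For the if direction, let $M=\sum_jX_j^*N^jX_j$ be any proper $\Ae$-convex decomposition. Fix $j$ and set $D^j_i:=X_j^*N^j_iX_j$, so that $0\le D^j_i\le M_i$ and $\sum_iD^j_i=X_j^*X_j\in\Ae$. The hypothesis produces $Y_j\in\Ae$ with $X_j^*N^j_iX_j=Y_j^*M_iY_j$; summing over $i$ forces $Y_j^*Y_j=X_j^*X_j$, hence $|Y_j|=|X_j|$, which is invertible in $\Ae$ by properness. The polar factors $V_j:=X_j|X_j|^{-1}$ and $Z_j:=Y_j|X_j|^{-1}$ therefore lie in $\Ae$ and are isometries, hence unitary in the finite-dimensional algebra. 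Substituting $X_j=V_j|X_j|$, $Y_j=Z_j|X_j|$ in the identity above and cancelling the invertible factor $|X_j|$ on both sides yields $V_j^*N^j_iV_j=Z_j^*M_iZ_j$, so $N^j_i=(Z_jV_j^*)^*M_i(Z_jV_j^*)$, i.e.\ $N^j\sim_\Ae M$ via the unitary $U_j:=Z_jV_j^*\in\Ae$.

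The main obstacle is the only if direction when $A$ or $I-A$ fails to be invertible in $\Ae$; the perturbation-plus-compactness step resolves it, crucially using that $\Ae$ is finite-dimensional. Everything else is a routine manipulation of polar decompositions within the subalgebra.
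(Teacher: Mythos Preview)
Your proof is correct. The ``if'' direction is essentially identical to the paper's: from $Y_j^*Y_j=X_j^*X_j$ one extracts a unitary in $\Ae$ via polar decomposition and cancels the invertible modulus. You just write this out more explicitly than the paper does.

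The ``only if'' direction differs in how the degenerate case is handled. The paper also forms the (generally improper) $\Ae$-convex decomposition
\[
M_i=D^{1/2}N_iD^{1/2}+(I-D)^{1/2}N'_i(I-D)^{1/2},
\]
but instead of perturbing, it applies Lemma~\ref{lemma:simored_face} to the compact $\Ae$-face $\Fe_M=\{U^*MU:U\text{ unitary in }\Ae\}$ of Lemma~\ref{lemma:extreme_face}, obtaining $QNQ+Q^\perp N'Q^\perp\sim_\Ae M$ and hence $D_i=D^{1/2}U^*M_iUD^{1/2}$ with $X=UD^{1/2}$. Your perturbation $D_i^\epsilon=(1-\epsilon)D_i+\epsilon tM_i$ reduces to the invertible case and then extracts $X$ by compactness of the unit ball of $\Ae$. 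Both routes rest on finite-dimensionality (you use compactness of the unit ball directly; the paper uses compactness of the unitary group in $\Ae$ and closedness of the face). Your argument is more self-contained for this single lemma, while the paper's route amortises the work: Lemma~\ref{lemma:simored_face} is proved once (by a perturbation argument very close in spirit to yours) and then reused in Theorem~\ref{thm:faces} and in Section~4.2, so the paper gains by packaging it separately.
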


\begin{coro}\label{coro:PVM} Let $P\in\Me(\Ha,n)$ be a PVM. Then $P$ is $\Ae$-extremal, for any subalgebra $\Ae\subseteq B(\Ha)$.

\end{coro}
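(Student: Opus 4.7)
The plan is to invoke Lemma \ref{lemma:char_extreme} and reduce the corollary to a short functional-calculus computation. Suppose $0\le D_i\le P_i$ for $i=1,\dots,n$ and $S:=\sum_i D_i\in\Ae$; by the lemma it suffices to exhibit some $X\in\Ae$ with $D_i=X^*P_iX$ for every $i$.

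First I would exploit the fact that $P$ is a PVM. Since $0\le D_i\le P_i$ and $P_i$ is a projection, $D_i$ is supported on $P_i\Ha$, hence $D_i=P_iD_iP_i$ and $P_jD_iP_j=0$ for $j\ne i$. Applying this to $S$ from both sides gives $P_iSP_i=D_i$, and moreover the operators $D_i$ have pairwise orthogonal supports, so $D_iD_j=0$ for $i\ne j$.

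The natural candidate is $X:=S^{1/2}$. Because $S\in\Ae$ and $\Ae$ is a C*-subalgebra of $B(\Ha)$, continuous functional calculus keeps $S^{1/2}$ in $\Ae$. From the orthogonality of the supports of the $D_j$'s one checks $(\sum_j D_j^{1/2})^2=\sum_j D_j=S$, so by uniqueness of the positive square root
\[
S^{1/2}=\sum_j D_j^{1/2}.
\]
Since $P_iD_j^{1/2}=\delta_{ij}D_i^{1/2}$, this yields $P_iX=D_i^{1/2}$ and therefore
\[
X^*P_iX=D_i^{1/2}D_i^{1/2}=D_i,
\]
as required.

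The argument is essentially routine; the only point that could cause trouble is making sure that $S^{1/2}$ genuinely lies in $\Ae$, which is precisely where one needs $\Ae$ to be a C*-subalgebra (and not merely a $*$-closed subalgebra) so that the functional calculus applies. Once this is granted, Lemma \ref{lemma:char_extreme} immediately delivers $\Ae$-extremality of $P$.
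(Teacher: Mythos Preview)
Your argument is correct and follows the same route as the paper: invoke Lemma~\ref{lemma:char_extreme} and take $X=S^{1/2}=D^{1/2}$. The paper's version is marginally shorter---it observes directly that $D$ commutes with each $P_i$ (so $D_i=DP_i=D^{1/2}P_iD^{1/2}$) rather than expanding $S^{1/2}=\sum_j D_j^{1/2}$---but the content is identical.
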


\begin{proof}
Let $0\le D_i\le P_i$ be such that $D:=\sum_i D_i\in\Ae$. Then $D$ commutes with all $P_i$ and
\[
D_i=DP_i=D^{1/2}P_iD^{1/2}.
\]
By Lemma \ref{lemma:char_extreme}, $P$ is $\Ae$-extremal.
\end{proof}

We say that an element $M\in \Me(\Ha,n)$ is {$\Ae$-irreducible} if 
the only projections in $\Ae$ commuting with all $M_1,\dots, M_n$  are 0 and $I$.
If  $M$ is $\Ae$-extremal and $\Ae$-irreducible, then $M$ is called {$\Ae$-pure} (cf. \cite{famo1997Cstar}).

\begin{lemma}\label{lemma:char_pure}  Let $M\in \Me(\Ha,n)$ and let $P_i$ be the support projections of $M_i$, $i=1,\dots,n$. Then $M$ is
$\Ae$-pure if and only if  $D_i\in B(P_i\Ha)$,  $\sum_i D_i\in \Ae$, $i=1,\dots,n$ implies that there is some $z\in \mathbb C$ such that $D_i=zM_i$, $i=1,\dots,n$.

\end{lemma}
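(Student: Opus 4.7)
The plan is to prove both directions of the equivalence separately; the $(\Leftarrow)$ direction is direct, while $(\Rightarrow)$ is the substantive one.

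For $(\Leftarrow)$, assuming the implication, I would verify the two components of $\Ae$-purity. For $\Ae$-extremality via Lemma~\ref{lemma:char_extreme}: given $0\le D_i\le M_i$ with $\sum_iD_i\in\Ae$, the bound $D_i\le M_i$ places $D_i\in B(P_i\Ha)$, so the hypothesis yields $D_i=zM_i$ with $z\in[0,1]$, and $X=\sqrt{z}\,I\in\Ae$ witnesses $D_i=X^*M_iX$. For $\Ae$-irreducibility: given a projection $E\in\Ae$ commuting with every $M_i$, set $D_i:=EM_i=EM_iE$; then $0\le D_i\le M_i$ and $\sum_iD_i=E\in\Ae$, so the hypothesis gives $EM_i=zM_i$, and summing yields $E=zI$, forcing $z\in\{0,1\}$.

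For $(\Rightarrow)$, assume $M$ is $\Ae$-pure and let $D_i\in B(P_i\Ha)$ with $\sum_iD_i\in\Ae$. Splitting into Hermitian and anti-Hermitian parts (each still in $B(P_i\Ha)$, with sums in $\Ae$ since $\Ae$ is $*$-closed), I reduce to Hermitian $D_i$. Since $M_i$ is strictly positive on $P_i\Ha$, for small enough $\epsilon>0$ the shift $\tilde D_i:=\epsilon D_i+\tfrac12 M_i$ satisfies $0\le\tilde D_i\le M_i$, and $\sum_i\tilde D_i=\epsilon\sum_iD_i+\tfrac12 I\in\Ae$. Lemma~\ref{lemma:char_extreme} then supplies $X\in\Ae$ with $\tilde D_i=X^*M_iX$ and $X^*X=\sum_i\tilde D_i\in\Ae$. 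The endgame: once $X^*X=cI$ for some $c\ge 0$ is established, summing yields $\epsilon\sum_iD_i+\tfrac12 I=cI$, so $\sum_i(D_i-zM_i)=0$ with $z=(c-\tfrac12)/\epsilon$; since $\Ae$-extremality implies the weak-independence characterization of extremality (Example~\ref{ex:conv}), this forces $D_i-zM_i=0$, completing the argument.

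The main obstacle is to establish $X^*X=cI$. From $\tilde D_i\in B(P_i\Ha)$ one gets $X^*M_iXP_i^\perp=0$, which (since $M_i^{1/2}$ has support $P_i\Ha$) forces $M_i^{1/2}XP_i^\perp=0$, hence $P_iXP_i^\perp=0$ for every $i$; thus $X$ is block lower-triangular in each decomposition $\Ha=P_i\Ha\oplus P_i^\perp\Ha$. My plan is to pool these one-sided block constraints across all $i$, combined with $X^*X\in\Ae$ and its Hermiticity, to show that every spectral projection of $X^*X$ commutes with each $M_j$; $\Ae$-irreducibility then makes each such projection equal $0$ or $I$, so $X^*X$ has a single eigenvalue. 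I expect this simultaneous commutation step to be the most delicate one: each individual $i$ gives only partial block information, and I would attack it via the polar decomposition $X=U|X|$ inside $\Ae$ and careful tracking of how $U$ and $|X|$ interact with each support projection $P_i$.
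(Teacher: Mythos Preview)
Your $(\Leftarrow)$ direction is correct and matches the paper's argument.

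For $(\Rightarrow)$, your setup---reducing to Hermitian $D_i$, shifting so that $0\le\tilde D_i\le M_i$, and invoking Lemma~\ref{lemma:char_extreme} to obtain $X\in\Ae$ with $\tilde D_i=X^*M_iX$---is exactly what the paper does (with slightly different scaling constants). Your endgame via weak independence is also sound, granted $X^*X\in\mathbb CI$.

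The gap is precisely the step you yourself flag. The one-sided constraints $P_iXP_i^\perp=0$, together with $X^*X\in\Ae$, do not visibly force the spectral projections of $X^*X$ to commute with each $M_j$, and I do not see how the polar decomposition of $X$ inside $\Ae$ helps: nothing controls $P_i^\perp XP_i$, and the unitary part of $X$ can scramble the $P_i$-block structure arbitrarily. The paper closes this gap by a genuinely different device. It applies Lemma~\ref{lemma:char_extreme} a \emph{second} time, to $M_i-\tilde D_i$, producing $Y\in\Ae$ with $X^*M_iX+Y^*M_iY=M_i$ and $X^*X+Y^*Y=I$. The unital completely positive map $\Phi(A)=X^*AX+Y^*AY$ then fixes every $M_i$ and every element of $\Ae'$, so $\Fe_\Phi'\subseteq\Ae''=\Ae$, and every projection in $\Fe_\Phi'$ commutes with all $M_i$; $\Ae$-irreducibility now gives $\Fe_\Phi'=\mathbb CI$. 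A separate fixed-point lemma (Lemma~\ref{lemma:fixedp}, an Arveson-style argument showing $\Fe_\Phi$ is a subalgebra) then forces $\Phi=id_\Ha$, whence the Kraus operator $X$ is a scalar, $X=zI$, and $\tilde D_i=|z|^2M_i$ directly---no weak-independence step is needed. The idea you are missing is thus to bring in the complementary operator $Y$ and exploit the fixed-point structure of the resulting UCP map, rather than trying to extract everything from $X$ alone.
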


\section{Process POVMs}\label{sec:ppovm}

A measurement on quantum channels with outcomes in the set $X=\{1,\dots,n\}$ is naturally defined as an affine map $\mathbf m:\Ce(\Ha,\Ka)\to \prob(X)$, the set of  probability distributions over $X$.  
For $\Phi\in \Ce(\Ha,\Ka)$ and  $i\in X$,  the value $\mathbf m(\Phi)_i$ is interpreted as the probability that the outcome of the measurement is $i$ if the true channel is $\Phi$. Similarly to usual quantum measurements, there is a collection of positive 
operators $F_1,\dots,F_n$ associated with $\mathbf m$ \cite{jencova2012generalized}, but acting on the tensor product $\Ka\otimes \Ha$ and with the normalization
$\sum_i F_i=I_\Ka\otimes \sigma$, for some $\sigma\in \states(\Ha)$. The relation of $F$ and $\mathbf m$ is 
\[
\mathbf m(\Phi)_i=\Tr F_i C(\Phi),\qquad i=1,\dots,n.
\]
Any collection $\{F_1,\dots,F_n\}$ of positive operators with this property 
is called a process POVM (see \cite{ziman2008PPOVM}). Moreover, it is easy to see that any process POVM defines a measurement on channels, in the above sense.
We will denote the set of all process POVMs on $\Ka\otimes \Ha$ with $n$ outcomes by $\Fe(\Ha,\Ka,n)$.

 To save some space and simplify notations, we  identify $B(\Ha)$ with the subalgebra $I_\Ka\otimes B(\Ha)\subseteq B(\Ka\otimes \Ha)$. Similarly, operators $T:\Ha\to \Ha_0$ will be identified with their natural extensions $I_\Ka\otimes T: \Ka\otimes \Ha\to \Ka\otimes \Ha_0$.

An obvious way how to perform a measurement on channels is to apply the channel to an input state (possibly on the system coupled with an ancilla) and measure the outcome. The next proposition shows that indeed all measurements are obtained in this way. Moreover, it gives a certain representation result for process POVMs.

\begin{prop}\cite{ziman2008PPOVM}\label{lemma:ppovms_testers}
Let $(\Ha_0,\rho,M)$ be a triple consisting of a Hilbert space $\Ha_0$, a state $\rho\in \states(\Ha\otimes \Ha_0)$ and a POVM $M\in \Me(\Ka\otimes \Ha,n)$. Then there is a unique process POVM $F\in \Fe(\Ha,\Ka,n)$ such that 
\begin{equation}\label{eq:ppovm_tester}
\Tr C(\Phi)F_i=\Tr  M_i(\Phi\otimes id_{\Ha_0})(\rho),\quad  i=1,\dots,n, \ \Phi\in \Le(\Ha,\Ka).
\end{equation}
Conversely, for any $F\in \Fe(\Ha,\Ka,n)$, there exists a triple $(\Ha_0,\rho,M)$ with a pure state $\rho$,  such that  (\ref{eq:ppovm_tester}) holds.
\end{prop}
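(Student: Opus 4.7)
The plan is to exploit the Choi isomorphism $C:\Le(\Ha,\Ka)\to B(\Ka\otimes\Ha)$ in both directions: in the forward direction it transports the linear functional defined by the triple to a unique operator $F_i$, and in the converse direction it lets us invert the construction by purifying the normalization state $\sigma$. For the forward direction, fix a triple $(\Ha_0,\rho,M)$ and note that for each $i$ the map $\Phi\mapsto \Tr M_i(\Phi\otimes id_{\Ha_0})(\rho)$ is a linear functional on $\Le(\Ha,\Ka)$, hence on $B(\Ka\otimes\Ha)$ via $\Phi\leftrightarrow C(\Phi)$; by trace duality there is a unique $F_i\in B(\Ka\otimes\Ha)$ with $\Tr F_iC(\Phi)$ equal to this quantity. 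Positivity is automatic: any $A\ge 0$ equals $C(\Phi)$ for some completely positive $\Phi$, and then $\Tr F_iA=\Tr M_i(\Phi\otimes id_{\Ha_0})(\rho)\ge 0$ since $M_i\ge 0$ and $(\Phi\otimes id_{\Ha_0})(\rho)\ge 0$. Summing over $i$ gives $\Tr(\sum_iF_i)C(\Phi)=\Tr\Phi(\ptr_{\Ha_0}\rho)$, and a direct calculation using the explicit form of $\psi_\Ha$ identifies the unique operator realizing this functional as $I_\Ka\otimes\sigma$ for a density operator $\sigma$ on $\Ha$; uniqueness of $F_i$ is then immediate.

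For the converse, write $\sum_iF_i=I_\Ka\otimes\sigma$, take $\Ha_0$ of dimension equal to the rank of $\sigma$, and choose $T:\Ha\to\Ha_0$ with $T^*T=\sigma$, so that $\Tr T^*T=1$ and $T$ maps $\supp(\sigma)$ bijectively onto $\Ha_0$. Set $\rho=|T\kk\bb T|\in\states(\Ha\otimes\Ha_0)$, a pure state. A direct computation yields the key identity
\[
(\Phi\otimes id_{\Ha_0})(|T\kk\bb T|)=(I_\Ka\otimes T)\,C(\Phi)\,(I_\Ka\otimes T^*).
\]
Since $0\le F_i\le I_\Ka\otimes\sigma$, each $F_i$ vanishes outside $\Ka\otimes\supp(\sigma)$, so
\[
M_i:=(I_\Ka\otimes T^{-*})\,F_i\,(I_\Ka\otimes T^{-1})
\]
is unambiguously defined on $\Ka\otimes\Ha_0$ using the restricted inverse of $T$; clearly $M_i\ge 0$, and $\sum_iM_i=I_\Ka\otimes(T^{-*}\sigma T^{-1})=I$. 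Substituting $M$ into the key identity and using cyclicity of the trace recovers $\Tr M_i(\Phi\otimes id_{\Ha_0})(\rho)=\Tr F_iC(\Phi)$, as required.

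The main obstacle is the converse in the case of singular $\sigma$: the purification has to live in a space of dimension equal to $\mathrm{rank}(\sigma)$, and $T^{-1}$ is only defined on the range of $T$. Both issues are resolved by the inequality $F_i\le I_\Ka\otimes\sigma$, itself an immediate consequence of $F_i\ge 0$ and $\sum_jF_j=I_\Ka\otimes\sigma$, which confines every $F_i$ to $\Ka\otimes\supp(\sigma)$ and thereby makes the conjugation by the restricted $T^{-1}$ canonical. The remainder is bookkeeping with the Choi map, in particular the basis-dependent transpose that enters through the specific form of $\psi_\Ha$.
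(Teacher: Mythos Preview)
Your argument is correct and follows essentially the same route as the paper's proof. The only cosmetic difference is that the paper writes down the explicit formula $F_i=(id\otimes\Phi_\rho^*)(M_i)$ (with $\Phi_\rho$ the completely positive map whose Choi matrix is $\rho$) rather than invoking trace duality abstractly, and in the converse it makes the specific choice $T=\sigma^{1/2}$ with $\Ha_0=\supp(\sigma)\Ha$; your general $T$ with $T^*T=\sigma$ differs from this only by a unitary on $\Ha_0$.
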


\begin{proof}
Let $(\Ha_0,\rho,M)$ be such a triple. By the Choi isomorphism, there exists a completely positive map $\Phi_\rho: B(\Ha)\to B(\Ha_0)$ such that 
$(id_\Ha\otimes \Phi_\rho)(\psi_\Ha)=\rho$. This map is given by
\[
\Phi_\rho(A)= \ptr_\Ha [\rho A^t],\qquad A\in B(\Ha),
\]
where $A^t$ is the transpose of $A$ with respect to the ONB $\{|i\>\}$. For $\Phi\in \Le(\Ha,\Ka)$, 
\[
\Tr M_i (\Phi\otimes id_{\Ha_0})(\rho)=\Tr M_i(\Phi\otimes \Phi_\rho)(\psi_\Ha)=\Tr F_iC(\Phi),
\]
with 
\[
F_i=(id\otimes \Phi_\rho^*)(M_i),
\]
where  $\Phi_\rho^*: B(\Ha_0)\to B(\Ha)$ is the adjoint of $\Phi_\rho$ with respect to the Hilbert-Schmidt
 inner product, that is,
\begin{equation}\label{eq:R}
\Phi_\rho^*(B)=\left( \ptr_{\Ha_0} [B\rho]\right)^t,\qquad B\in B(\Ha_0).
\end{equation}
 Since $0\le M_i\le I$, $\sum_i M_i=I$  and 
$\Phi_\rho^*$ is completely positive, we have $F_i\ge 0$ and $\sum_i F_i= \sigma$, where 
\[
\sigma:=\Phi_\rho^*(I_{\Ha_0})=(\ptr_{\Ha_0}\rho)^t\in \states(\Ha). 
\]
Hence $F=\{F_1,\dots,F_n\}$ is a process POVM.
 Uniqueness follows from the fact that $C$ is an isomorphism $\Le(\Ha,\Ka)$ onto $B(\Ka\otimes \Ha)$.

Conversely, let $F$ be a process POVM, $\sum_iF_i= \sigma$, $\sigma\in \states(\Ha)$. Let $\supp(\sigma)$ be the projection onto the support of $\sigma$ and let $\Ha_0=\supp(\sigma)\Ha$. Then $\supp(F_i)\le  \supp(\sigma)$
 and we may put $M_i:= \sigma^{-1/2}F_i \sigma^{-1/2}\in B(\Ka\otimes \Ha_0)$. 
Clearly, $M_i\ge 0$ and $\sum_i M_i =I_{\Ka\otimes \Ha_0}$, so that $M\in \Me(\Ka\otimes \Ha_0,n)$. Moreover, for $\Phi\in \Le(\Ha,\Ka)$,
\[
\Tr F_iC(\Phi)=\Tr M_i \sigma^{1/2}C(\Phi)\sigma^{1/2}=\Tr M_i(\Phi\otimes id)(\rho),
\] 
where 
\[
\rho= \sigma^{1/2}\psi_H \sigma^{1/2}
\]
is a pure state in $\states(\Ha\otimes \Ha_0)$. Then $(\Ha_0,\rho,M)$ is the required triple with a pure input state.

\end{proof}

Any triple satisfying (\ref{eq:ppovm_tester})  will be called a 
{representation} of $F$.  
If $(\Ha_0,\rho,M)$ and $(\Ha_0',\rho',M')$ are representations of the same process POVM $F$, we will say that the triples are equivalent and write
\[
(\Ha_0,\rho,M)\simeq (\Ha_0',\rho',M')\simeq F.
\]
The proof of the following lemma is straightforward.

\begin{lemma}\label{lemma:approx} Let $\rho\in \states(\Ha\otimes \Ha_0)$ and  $M'\in \Me(\Ka\otimes \Ha_0',n)$. Then for any 
$\chi\in \Ce(\Ha_0,\Ha_0')$, 
\[
(\Ha_0',(id\otimes \chi)(\rho),M')\simeq (\Ha_0,\rho, (id\otimes \chi^*)(M')).
\]
\end{lemma}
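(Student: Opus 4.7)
The plan is to invoke the uniqueness part of Proposition~\ref{lemma:ppovms_testers}: two triples represent the same process POVM if and only if they induce the same probabilities $\Tr M_i(\Phi\otimes id)(\rho)$ for every $\Phi\in \Le(\Ha,\Ka)$ and every outcome $i$. So I would verify this coincidence by a direct computation on each side.

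For the left triple $(\Ha_0',(id\otimes \chi)(\rho),M')$, I exploit that $(\Phi\otimes id_{\Ha_0'})\circ (id_\Ha\otimes \chi)=\Phi\otimes \chi$ as linear maps $B(\Ha\otimes \Ha_0)\to B(\Ka\otimes \Ha_0')$, which gives
\[
\Tr M'_i\,(\Phi\otimes id_{\Ha_0'})\bigl((id_\Ha\otimes \chi)(\rho)\bigr)=\Tr M'_i\,(\Phi\otimes \chi)(\rho).
\]
For the right triple $(\Ha_0,\rho,(id\otimes \chi^*)(M'))$, the defining property of the Hilbert-Schmidt adjoint, applied in tensor form, states that for every $Z\in B(\Ka\otimes \Ha_0)$,
\[
\Tr \bigl[(id_\Ka\otimes \chi^*)(M'_i)\bigr]\,Z=\Tr M'_i\,(id_\Ka\otimes \chi)(Z).
\]
Taking $Z=(\Phi\otimes id_{\Ha_0})(\rho)$ and using $(id_\Ka\otimes \chi)\circ(\Phi\otimes id_{\Ha_0})=\Phi\otimes \chi$ yields exactly $\Tr M'_i\,(\Phi\otimes \chi)(\rho)$, matching the left side.

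There is no real obstacle: the lemma is the statement that post-processing on the ancilla in the Schr\"odinger picture is dual to pre-processing on the POVM in the Heisenberg picture, and the proof is merely a matter of rewriting the composition $(\Phi\otimes id)\circ(id\otimes \chi)=\Phi\otimes \chi$ and applying the definition of $\chi^*$. The only care needed is to keep track of which tensor factor each map acts on, but this is already settled by the notational convention identifying $B(\Ha_0)$ with $I_\Ka\otimes B(\Ha_0)$ fixed just before Proposition~\ref{lemma:ppovms_testers}.
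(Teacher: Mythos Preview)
Your argument is correct and is exactly the straightforward verification the paper has in mind: the paper omits the proof entirely, stating only that it is ``straightforward.'' What you have written---showing that both triples yield $\Tr M'_i\,(\Phi\otimes\chi)(\rho)$ for every $\Phi$ and every $i$, and then invoking the uniqueness in Proposition~\ref{lemma:ppovms_testers}---is precisely the routine computation being left to the reader.
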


\subsection{Minimal representations}

Let $(\Ha_0,\rho,M)\simeq F$ and assume the input state $\rho$ is pure. Then $\rho=|T\kk\bb T|$ and 
$\Phi_\rho^*(a)=T^*aT$ for some $T:\Ha\to \Ha_0$. By the proof of Lemma \ref{lemma:ppovms_testers}, 
\begin{equation}\label{eq:pure}
F_j= T^*M_j T, \quad j=1,\dots,n.
\end{equation}
We say the representation (or the triple $(\Ha_0,\rho,M)$) is {minimal} if the input state is pure and $T$ is surjective.  It is clear
 that the representation constructed in the proof of 
 Lemma \ref{lemma:ppovms_testers} is minimal.

\begin{lemma}\label{lemma:minimal_testers} Let $(\Ha_0,\rho,M)$ be  minimal  and let   
$(\Ha_0',\rho',M')\simeq (\Ha_0,\rho,M)$. 
Then there is a channel  $\chi: B(\Ha_0)\to B(\Ha_0')$, such that $(id_\Ha\otimes \chi)(\rho)=\rho'$ and $M=(id_\Ha\otimes \chi^*)(M')$. 
If if $\rho'$ is pure, then $\chi$ is an isometric channel  and if $(\Ha_0',\rho',M')$ is minimal, then $\chi$ is a unitary channel.

\end{lemma}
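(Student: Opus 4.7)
The plan is to first settle the case of a pure $\rho'$ by constructing an explicit isometry between the two ancillas, then reduce the general case to the pure one by purifying $\rho'$, and finally extract the unitarity statement from minimality of the second triple.

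\textbf{Pure case.} Write $\rho=|T\kk\bb T|$ with $T:\Ha\to\Ha_0$ surjective, and suppose $\rho'=|S\kk\bb S|$ for some $S:\Ha\to\Ha_0'$. By (\ref{eq:pure}), $T^*M_jT=F_j=S^*M_j'S$ for all $j$; summing over $j$ gives $T^*T=S^*S=\sigma$. Hence $\ker T=\ker S$, and since $\mathrm{range}\,T=\Ha_0$, the prescription $V(T|\psi\>):=S|\psi\>$ defines an isometry $V:\Ha_0\to\Ha_0'$ with $VT=S$. Set $\chi(A):=VAV^*$, so that $\chi^*(B):=V^*BV$. A direct computation shows $(id_\Ha\otimes\chi)(|T\kk\bb T|)=|VT\kk\bb VT|=|S\kk\bb S|=\rho'$. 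For the POVM relation, let $D_j:=M_j-V^*M_j'V\in B(\Ka\otimes\Ha_0)$. Then $T^*D_jT=0$, and since $I_\Ka\otimes T$ is surjective onto $\Ka\otimes\Ha_0$, this forces $D_j=0$, i.e.\ $(id_\Ka\otimes\chi^*)(M_j')=M_j$.

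\textbf{General case.} For an arbitrary $\rho'\in\states(\Ha\otimes\Ha_0')$, choose a purification $|\xi\>\in\Ha\otimes\Ha_0'\otimes\Ha_E$ and form the triple $(\Ha_0'\otimes\Ha_E,\,|\xi\>\<\xi|,\,M'\otimes I_E)$; using $\sum_j M_j'\otimes I_E=I$ and Lemma \ref{lemma:approx} applied to the partial-trace channel, one checks that this is again a representation of $F$ with pure input. The previous paragraph then yields an isometry $V:\Ha_0\to\Ha_0'\otimes\Ha_E$ with $(I_\Ha\otimes V)|T\kk=|\xi\>$ and $V^*(M_j'\otimes I_E)V=M_j$. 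Defining $\chi(A):=\ptr_E(VAV^*)$ gives a channel $B(\Ha_0)\to B(\Ha_0')$, and taking $\ptr_E$ in the two identities above produces $(id_\Ha\otimes\chi)(\rho)=\rho'$ and $(id_\Ka\otimes\chi^*)(M')=M$.

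\textbf{Minimality refinements and main obstacle.} If $\rho'$ is itself pure, the construction of the pure case already gives $\chi(A)=VAV^*$ with $V$ an isometry, so $\chi$ is an isometric channel. If moreover $(\Ha_0',\rho',M')$ is minimal, then $S$ is surjective; combined with $VT=S$ and surjectivity of $T$, this forces $V\Ha_0=V T\Ha=S\Ha=\Ha_0'$, so $V$ is unitary and $\chi$ is a unitary channel. I expect the main technical obstacle to be the pure case: one must both construct $V$ from the identity $T^*T=S^*S$ (a standard polar-type argument that needs $\ker T=\ker S$ and $\mathrm{range}\,T=\Ha_0$) and deduce $M_j=V^*M_j'V$ from $T^*(M_j-V^*M_j'V)T=0$, where the cancellation uses surjectivity of $I_\Ka\otimes T$. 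This is exactly the point where minimality of the original triple is essential; once it is settled, the passage to mixed $\rho'$ via purification is routine.
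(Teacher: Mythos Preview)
Your proof is correct and follows essentially the same route as the paper's own argument: first construct the isometry in the pure case from $T^*T=S^*S$ and surjectivity of $T$, cancel $T$ on both sides to get the POVM relation, then reduce the mixed case to the pure one by purifying $\rho'$ and composing the resulting isometric channel with a partial trace. Your presentation is in fact slightly more explicit than the paper's (e.g.\ you write out $M'\otimes I_E$ where the paper's notation is a bit abbreviated), but the ideas are identical.
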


\begin{proof} By minimality, $\rho=|T\kk\bb T|$ with $T:\Ha\to \Ha_0$ surjective.
Suppose first that $\rho'$ is pure, $\rho'=|T'\kk\bb T'|$. Since the two triples are representations  of the same process POVM $F$, we must have 
\[
T^*T=\Phi_\rho^*(I_{\Ha_0})=\sum_jF_j=\Phi_{\rho'}^*(I_{\Ha_0'})=(T')^*T'. 
\]
 By considering polar decompositions of $T$ and $T'$, and using the fact that $T$ is surjective, we obtain that there is some isometry $U:\Ha_0\to \Ha_0'$ such that $T'=UT$, so that
\[
\rho'= T'\psi_\Ha (T')^*= U\rho U^*=(id_\Ha\otimes Ad_{U})(\rho).
\]
Similarly,
\[
F_i= T^*M_i T= (T')^*M_i' T'= T^*U^*M_i' UT
\]
and this implies $M=(id_H\otimes Ad_{U^*})(M')$ by surjectivity of $T$. If $(\Ha_0',\rho',M')$ is minimal, then also $T'$ is surjective and this implies that $U$ is  unitary.

Let now $\rho'$ be any state and let $\rho'_0$ be its purification, that is, there is a Hilbert space $\Ha_1$ such that 
$\rho_0'\in \states(\Ha\otimes \Ha_0'\otimes \Ha_1)$ is a pure state and $\ptr_{\Ha_1}\rho'_0=\rho'$. By  Lemma \ref{lemma:approx}, we obtain
 \[
(\Ha_0,\rho, M)\simeq (\Ha_0',\rho',M')=(\Ha_0',\ptr_{\Ha_1}(\rho_0'),M')\simeq (\Ha_0'\otimes \Ha_1, \rho_0',M').
 \]
By the first part of the proof, there is some isometry 
$U: \Ha_0\to \Ha_0'\otimes \Ha_1$ such that $\rho_0'= U\rho U^*$ and
 $ U^*M' U=M$. Put $\chi=\ptr_{\Ha_1}\circ Ad_{U}$, then $\chi$ is a channel 
 $B(\Ha_0)\to B(\Ha_0')$ and we have $\rho'=\ptr_{\Ha_1}\rho_0'=(id_\Ha\otimes \chi)(\rho)$, $M=(id_\Ha\otimes \chi^*)(M')$. 

\end{proof}

%
%
%
%
%
%

Let $F$ be a process POVM and let  $\sum_j F_j= \sigma\in\states(\Ha)$. Then 
$r(F):=\mathrm{rank}(\sigma)$ will be called the {rank} of $F$.

\begin{lemma}\label{lemma:rank}
Let $F$ be a process POVM with rank $r$. Then for any representation $(\Ha_0,\rho, M)\simeq F$, the Schmidt number of the input state
$SN(\rho)\le r$. If $\rho$ is pure, then $SN(\rho)=r$ and the representation is minimal if and only if $\dim(\Ha_0)=r$.

\end{lemma}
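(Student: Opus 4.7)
The plan is to reduce everything to the identity $\sum_j F_j = \sigma = (\ptr_{\Ha_0}\rho)^t$ obtained in the proof of Proposition \ref{lemma:ppovms_testers}. Since transposition preserves rank, this gives $r = r(F) = \mathrm{rank}(\ptr_{\Ha_0}\rho)$. So the whole lemma is really about relating the Schmidt number of $\rho$ to the rank of its marginal on $\Ha$, together with some bookkeeping on the operator $T$ in the pure case.

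For the first claim $SN(\rho)\le r$, I would take the spectral decomposition $\rho=\sum_k p_k|\psi_k\>\<\psi_k|$ with $p_k>0$, and show each $|\psi_k\>$ has Schmidt rank at most $r$. The key observation is that $|\psi_k\>\<\psi_k|\le p_k^{-1}\rho$ (since $|\psi_k\>$ is an eigenvector and all eigenvalues of $\rho$ are $\le 1$ up to normalization, or just from the spectral decomposition), hence after partial trace $\ptr_{\Ha_0}|\psi_k\>\<\psi_k|\le p_k^{-1}\ptr_{\Ha_0}\rho$. This forces the support of the marginal of $|\psi_k\>$ to lie inside $\supp(\ptr_{\Ha_0}\rho)$, an $r$-dimensional subspace. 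Since the Schmidt rank of a pure bipartite state equals the rank of either marginal, $SR(\psi_k)\le r$. This provides an explicit convex decomposition of $\rho$ into pure states of Schmidt rank at most $r$, hence $SN(\rho)\le r$.

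For the pure case, write $\rho=|T\kk\bb T|$ with $T:\Ha\to\Ha_0$. Then $\Phi^*_\rho(a)=T^*aT$, so $\sigma=T^*T$ and $r=\mathrm{rank}(T^*T)=\mathrm{rank}(T)=SR(|T\kk)=SN(\rho)$, giving the equality. For minimality, recall that $(\Ha_0,\rho,M)$ is defined to be minimal when $\rho$ is pure and $T$ is surjective. Surjectivity of $T$ is equivalent to $\mathrm{rank}(T)=\dim(\Ha_0)$; combined with $\mathrm{rank}(T)=r$ this gives $\dim(\Ha_0)=r$. Conversely, if $\dim(\Ha_0)=r$, then $\mathrm{rank}(T)=r=\dim(\Ha_0)$ forces $T$ to be surjective, so the representation is minimal.

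I do not expect any serious obstacle: the only slightly non-trivial step is the domination argument showing that each spectral component of $\rho$ has its $\Ha$-marginal supported inside $\supp(\ptr_{\Ha_0}\rho)$, but this is a standard consequence of operator monotonicity of the partial trace and the eigenvector inequality $|\psi_k\>\<\psi_k|\le p_k^{-1}\rho$; the rest is a routine application of the formulas already established in the proof of Proposition \ref{lemma:ppovms_testers} and of the identity $SR(|T\kk)=\mathrm{rank}(T)$ recorded in Section 1.
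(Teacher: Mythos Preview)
Your argument is correct, and it is more self-contained than the paper's. The paper does not work directly with the given representation: it first invokes the explicit minimal representation built in Proposition~\ref{lemma:ppovms_testers} (where $\Ha_0=\supp(\sigma)$, so that automatically $\dim(\Ha_0)=r$ and the pure input has Schmidt rank $r$), and then appeals to Lemma~\ref{lemma:minimal_testers} to compare any other representation $(\Ha_0',\rho',M')$ to this one via a channel $\chi$ with $\rho'=(id\otimes\chi)(\rho)$. The inequality $SN(\rho')\le r$, the equality in the pure case, and the equivalence with $\dim(\Ha_0')=r$ for minimality are then read off from the properties of $\chi$ (channel, isometric, or unitary respectively). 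Your route bypasses Lemma~\ref{lemma:minimal_testers} entirely: you extract $r=\mathrm{rank}(\ptr_{\Ha_0}\rho)$ from the formula $\sigma=(\ptr_{\Ha_0}\rho)^t$ and then use a direct domination argument on the spectral decomposition of $\rho$ to bound the Schmidt ranks of its components. This is slightly longer but more elementary, as it does not rely on the (implicit) fact that local channels cannot increase Schmidt number; the paper's version is terser but leans on the structural Lemma~\ref{lemma:minimal_testers}.
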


\begin{proof} By the proof of Lemma \ref{lemma:ppovms_testers}, there is a minimal
representation $(\Ha_0,\rho,M)\simeq F$, where 
$\Ha_0$ is the range of $\sigma$, hence $SN(\rho)=\dim(\Ha_0)=r(F)$. The rest follows by Lemma \ref{lemma:minimal_testers}.

\end{proof}

\begin{rem}\label{rem:minimal} If the input state is pure, we will often abuse the notation and write the triple as $(\Ha_0,T,M)$, instead of $(\Ha_0, |T\kk\bb T|,M)$.
Let $(\Ha_0,T,M)\simeq F$ and let $\Ha_T\subseteq \Ha_0$ be the range of $T$, with $P_T$ the projection onto $\Ha_T$ (we will keep this notation throughout the paper). Then it is easy to see that $(\Ha_T,T, P_TM P_T)$ is an equivalent  minimal representation. 
On the other hand, if $\Ha_0\subseteq \Ha_0'$, then $F\simeq (\Ha_0',T,\widetilde M)$, where  $\widetilde M$ is the extension of $M$  in $\Me(\Ka\otimes \Ha_0',n)$ as in Remark \ref{rem:ext}.

\end{rem}

\section{Extreme points and faces of $\Fe(\Ha,\Ka,n)$}\label{sec:extr}

It is easy to see that the set of process POVMs is convex and compact.  We will now examine the extremal points and faces of the set $\Fe(\Ha,\Ka,n)$ and obtain a description  in terms of the  minimal representing triples. As it turns out, the convex structure of $\Fe(\Ha,\Ka,n)$ is closely related to the $B(\Ha_0)$-convex structure of $\Me(\Ka\otimes \Ha_0,n)$, where  $B(\Ha_0)$ is seen as a subalgebra of $B(\Ka\otimes \Ha_0)$. 

We first look at representing triples  of convex combinations of process POVMs.
 Let $F^1,\dots, F^m$ be process POVMs and let $(\Ha_i,T_i, M^i)\simeq F_i$ be representations (not necessarily minimal) with pure input states. Consider a convex combination $F=\sum\lambda_i F^i$, with $\lambda_i>0$ for all $i$.
Put 
\begin{equation}\label{eq:S}
S:=(\sum_i\lambda_i T_i^*T_i)^{1/2}\\
\end{equation}
and define the maps $X_i:\Ha_S\to \Ha_i$ by 
\begin{equation}\label{eq:Xi}
X_i\xi=\lambda_i^{1/2}T_iS^{-1}\xi, \qquad \xi\in \Ha_S,
\end{equation}
where $S^{-1}$ is the inverse of $S$ on $\Ha_S$ and 0 on the complement. Let also
\begin{equation}\label{eq:M}
N=\sum_i  X_i^* M^i X_i.
\end{equation}
Since $\sum_iX_i^*X_i=I_{\Ha_S}$, $N$ is a POVM on $\Ka\otimes \Ha_S$ and we have
\[
 S^*N S= \sum_i \lambda_i  T_i^*M^i T_i=\sum_i\lambda_iF^i=F.
\]
Hence $F\simeq (\Ha_S,S,N)$ and this representation is minimal. On the other hand, let $F$ be a process POVM and let 
$F\simeq (\Ha_0,T,M)$ be a representation, with a pure input state but again minimality is not assumed. Suppose that 
$M=\sum_{i=1}^m X_i^*M^i X_i$ is a $ B(\Ha_0)$-convex combination of some $M^i\in \Me(\Ka\otimes \Ha_0,n)$.
Put
\begin{equation}\label{eq:I}
\mathcal I:=\{i\in \{1,\dots,m\},\ X_iT\ne 0\}
\end{equation}
 and for $i\in \mathcal I$, we define
\begin{equation}\label{eq:muTiFi}
\mu_i:=\Tr T^*X_{i}^*X_{i}T, \qquad T_i:=\mu_i^{-1/2}X_{i}T,
\qquad F^i \simeq (\Ha_0,T_i,M^{i}).
\end{equation}
Then it is easy to see that $\mu_i>0$, $\sum_i\mu_i=1$, $F_i$ are process POVMs and
\[
F=T^*MT=\sum_iT^*X_i^*M^iX_iT=\sum_{i\in \mathcal I}\mu_i  T_i^*M^i T_i=\sum_{i\in \mathcal I} \mu_iF_j^i.
\]

Our first main result is the following characterization of minimal representations of extremal process POVMs.

\begin{theorem}\label{thm:extreme} Let $F\in \Fe(\Ha,\Ka,n)$ and let $(\Ha_0,T,M)\simeq F$ be a minimal representation. Then the following statements are equivalent.
\begin{enumerate}
\item[(i)] $F$ is extremal.
\item[(ii)] $M$ is  $ B(\Ha_0)$-extremal and for any representation
$(\Ha_0',\rho',M')\simeq F$, we have $SN(\rho')=r(F)$.
\item[(iii)] $M$ is  $ B(\Ha_0)$-pure.

\end{enumerate}

\end{theorem}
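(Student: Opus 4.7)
I would prove the cyclic implications (i)$\Rightarrow$(ii)$\Rightarrow$(iii)$\Rightarrow$(i). The organising idea is that the two constructions (\ref{eq:S})--(\ref{eq:M}) and (\ref{eq:I})--(\ref{eq:muTiFi}) preceding the theorem give a tight dictionary between convex decompositions of $F$ and $B(\Ha_0)$-convex decompositions of $M$ in a minimal representation, with the Schmidt-number condition in (ii) blocking reducing projections in $B(\Ha_0)$ that would otherwise destroy properness.

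For (i)$\Rightarrow$(ii), Schmidt-number half: if $(\Ha_0',\rho',M')\simeq F$ had $SN(\rho')<r(F)$, a pure-state decomposition $\rho'=\sum_k\mu_k|\xi_k\>\<\xi_k|$ with $SR(\xi_k)\le SN(\rho')$ gives $F=\sum_k\mu_kF^k$ with $F^k\simeq(\Ha_0',|\xi_k\>\<\xi_k|,M')$ of rank $SR(\xi_k)<r(F)$, contradicting $F^k=F$ forced by extremality. For the $B(\Ha_0)$-extremality half, let $M=\sum_jY_j^*N^jY_j$ be a proper $B(\Ha_0)$-convex combination. The construction (\ref{eq:I})--(\ref{eq:muTiFi}) gives $F=\sum_j\mu_jF^j$ with all $\mu_j>0$ (from invertibility of $Y_j$ and surjectivity of $T$). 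Extremality of $F$ forces $F^j=F$, and surjectivity of $T$ cancels the outer $T^*(\cdot)T$'s, yielding $N^j=Z_j^*MZ_j$ with $Z_j:=\mu_j^{1/2}Y_j^{-1}\in B(\Ha_0)$. The POVM normalisation $\sum_iN_i^j=I$ then gives $Z_j^*Z_j=I$, so $Z_j$ is unitary in $B(\Ha_0)$ and $N^j\sim_{B(\Ha_0)}M$.

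For (ii)$\Rightarrow$(iii), only $B(\Ha_0)$-irreducibility remains. Suppose toward contradiction that $Q\in B(\Ha_0)$ is a projection with $Q\notin\{0,I\}$ commuting with all $M_i$, so $M_i=QM_iQ+Q^\perp M_iQ^\perp$. Let $\mu:=\Tr(QTT^*)$; surjectivity of $T$ excludes the endpoints, so $\mu\in(0,1)$. The state
\[
\rho':=|QT\kk\bb QT|+|Q^\perp T\kk\bb Q^\perp T|\in\states(\Ha\otimes\Ha_0)
\]
satisfies $(\Ha_0,\rho',M)\simeq F$ (a short computation using $[Q,M_i]=0$), yet is a convex combination of two pure states with Schmidt ranks $\mathrm{rank}(QT)$ and $\mathrm{rank}(Q^\perp T)$, both at most $\dim(\Ha_0)-1<r(F)$. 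Hence $SN(\rho')<r(F)$, contradicting (ii).

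For (iii)$\Rightarrow$(i), let $F=\mu G+(1-\mu)H$ with $G,H\in\Fe(\Ha,\Ka,n)$ and $\mu\in(0,1)$. Set $D_i:=\mu G_i$; then $0\le D_i\le F_i=T^*M_iT$ and $\sum_iD_i=I_\Ka\otimes(\mu\sigma_G)$ where $\sigma_G$ is the state attached to $G$. Since $D_i\le T^*T$ forces $\ker T\subseteq\ker D_i$, for any right inverse $T^\dagger$ of the surjection $T$ the operator $E_i:=T^{\dagger*}D_iT^\dagger\in B(\Ka\otimes\Ha_0)$ satisfies $T^*E_iT=D_i$, $0\le E_i\le M_i$ (using $TT^\dagger=I_{\Ha_0}$), and $\sum_iE_i\in B(\Ha_0)$ (since $T^\dagger=I_\Ka\otimes T_{\mathrm{op}}^\dagger$ and $\sum_iD_i$ is tensor-structured). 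Lemma~\ref{lemma:char_pure} applied to the $B(\Ha_0)$-pure $M$ then forces $E_i=zM_i$ for some $z\in\mathbb{C}$, whence $D_i=zF_i$; trace-matching gives $z=\mu$, so $G=F$, and by symmetry $H=F$. The main technical obstacle is precisely this lifting step: one has to check that conjugation by $T^\dagger$ preserves the bound $E_i\le M_i$ and, crucially, that $\sum_iE_i$ lies in the subalgebra $B(\Ha_0)\subseteq B(\Ka\otimes\Ha_0)$—the latter being where the assumption that $G$ is a process POVM (rather than just a collection of positive operators) is used. The other delicate bookkeeping is the invertibility-and-unitarity step in (i)$\Rightarrow$(ii), where properness of the $B(\Ha_0)$-convex combination is essential.
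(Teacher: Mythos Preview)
Your argument is correct. The implications (i)$\Rightarrow$(ii) and (ii)$\Rightarrow$(iii) follow the paper almost exactly: your direct cancellation of $T^*(\cdot)T$ in (i)$\Rightarrow$(ii) is precisely what underlies the paper's appeal to Lemma~\ref{lemma:minimal_testers}, and your explicit $\rho'=|QT\kk\bb QT|+|Q^\perp T\kk\bb Q^\perp T|$ is nothing but $(id\otimes\chi_Q)(|T\kk\bb T|)$ for the pinching channel $\chi_Q$ the paper uses via Lemma~\ref{lemma:approx}.

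For (iii)$\Rightarrow$(i) you take a genuinely different route. The paper starts from minimal representations $(\Ha_i,T_i,M^i)$ of the pieces $F^i$, assembles the minimal representation $(\Ha_S,S,N)$ of $F$ via (\ref{eq:S})--(\ref{eq:M}), transports $B(\Ha_0)$-purity to $N$ by unitary equivalence, and then applies Lemma~\ref{lemma:char_pure} to $D^i_j:=X_i^*\widetilde M^i_jX_i$ (noting $\sum_jD^i_j=X_i^*X_i\in B(\Ha_S)$). You instead stay inside the given minimal representation $(\Ha_0,T,M)$ and lift $D_i=\mu G_i$ through a right inverse $T^\dagger$ to produce $E_i\le M_i$ with $\sum_iE_i\in B(\Ha_0)$, then invoke Lemma~\ref{lemma:char_pure} directly. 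Your approach is shorter and avoids the bookkeeping of embedding the $\Ha_i$ into $\Ha_S$ and extending the $M^i$; the paper's approach has the conceptual advantage of making explicit that a convex decomposition of $F$ \emph{is} a $B(\Ha_0)$-convex decomposition of the representing POVM, which is the theme driving the rest of Section~\ref{sec:extr}. Both arguments hinge on the same characterisation of $\Ae$-purity in Lemma~\ref{lemma:char_pure}.
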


\begin{proof} Suppose (i) and let $(\Ha_0',\rho',M')\simeq F$. By Lemma \ref{lemma:rank}, $SN(\rho')\le r(F)$, with equality if $\rho'$ is pure. Let  $\rho'=\sum_i\lambda_i \rho_i$ for some $0<\lambda_i<1$,
 $\sum_i\lambda_i=1$ and pure states $\rho_i$, and assume that $SR(\rho_k)<r(F)$ for some $k$. Then since $\rho_k$ is pure, $(\Ha_0',\rho_k,M')$ cannot be a representation of $F$.  Let $F^i$ be process POVMs such that $(\Ha_0',\rho_i,M')\simeq F^i$, then  $F^k\ne F$. But then 
 $F=\sum_i \lambda_i F_i$, with $F_k\ne F$, a contradiction. Hence we must have $SN(\rho')=SR(\rho_k)=r(F)$, for all $k$. 

Further, suppose $M=\sum_i X_i^*M^i X_i$ is a proper $ B(\Ha_0)$-convex combination of some 
$M^i\in \Me(\Ka\otimes \Ha_0,n)$. Note that then $X_iT\ne 0$ for all $i$ and  $F=\sum_i\mu_iF^i$, where $F^i\simeq(\Ha_0,T_i,M^i)$ are as in (\ref{eq:muTiFi}). Since $X_i$ is invertible and $T$ is surjective, 
these are minimal representations of $F^i$. By extremality, we must have $F^i=F$, so that $(\Ha_0,T_i, M^i)\simeq F$ for all $i$ 
and by Lemma \ref{lemma:minimal_testers}, this implies 
that $M^i\sim_{ B(\Ha_0)} M$. Hence $M$ is $ B(\Ha_0)$-extremal, this proves (ii).

Suppose (ii). We have to show that $M$ is $ B(\Ha_0)$-irreducible. So let $P\in B(\Ha_0)$ be a projection such that
 $ PM_j=M_jP$ for all $j$. Let $\chi_P:B(\Ha_0)\ni A\mapsto PAP+P^\perp AP^\perp$, then $\chi_P$ 
 is a unital channel such that $\chi_P^*=\chi_P$ and $(id\otimes \chi_P)(M)=M$. By Lemma \ref{lemma:approx}, 
 $F\simeq (\Ha_0,(id\otimes \chi_P)(|T\kk\bb T|),M)$ and it is clear that $SN((id\otimes \chi_P)(|T\kk\bb T|))=r(F)=\mathrm{rank}(T)$ if and only if $P=0$ or $I$. This proves (iii). 

Finally, suppose (iii) and let $F=\sum_i\lambda_iF^i$, where $F^i\in \Fe(\Ha,\Ka,n)$ have minimal minimal representations
$F^i\simeq(\Ha_i,T_i,M^i)$. Then $F$ has a minimal representation $(\Ha_{S},S,N)$ given by equations (\ref{eq:S})-(\ref{eq:M}).
Clearly, we may assume that $\Ha_i\subseteq \Ha_{S}$, so that $X_i\in B(\Ha_{S})$ and $N=\sum_i  X_i^*\widetilde M^i X_i$
 is a $ B(\Ha_{S})$-convex combination of the extensions $\widetilde M_i \in \Me(\Ka\otimes \Ha_{S},n)$.
By Lemma \ref{lemma:minimal_testers}, $M= U^*N U$ for some unitary $U: \Ha_0\to \Ha_{S}$, so that $N$ is $ B(\Ha_{S})$-pure. For all $i$ and $j$, we have  $D^i_j:=X_i^*\widetilde M^i_jX_i\le N_j$ and $\sum_j D^i_j=X_i^*X_i\in B(\Ha_S)$. Lemma \ref{lemma:char_pure} now implies that $D_j^i=s_iN_j$ for all $j$ and some $s_i\in [0,1]$. But then we must have
\[
F^i=T_i^*M^iT_i=S^*NS=F,\qquad \forall i
\]
and $F$ is extremal.

\end{proof}

\subsection{Faces of $\Fe(\Ha,\Ka,n)$}

 Let $\Ha_0\subseteq \Ha$ be a subspace. Then any $F\in \Fe(\Ha,\Ka,n)$ with $r(F)\le \dim(\Ha_0)=:r$ has a representation with ancilla $\Ha_0$ and a pure input state. Once the ancilla is fixed, any such $F$ can be represented by a pair $(T,M)$, where 
\[ 
T\in \Te:=\{T:\Ha\to \Ha_0, \Tr T^*T=1\}
\]
 and $M\in \Me:=\Me(\Ka\otimes \Ha_0,n)$. Conversely, for any $(T,M)\in \Te\times \Me$, 
 $(\Ha_0,T,M)\simeq F$ with $r(F)\le r$. Equivalence of representations defines an equivalence relation on $\Te\times\Me$ 
 as $(T,M)\simeq (T'M')$ if and only if  $T^*MT=(T')^*M'T'$. Note that
 $(T,M)\simeq (T',M')$ if and only if there is a unitary $U\in B(\Ha_0)$  such that
  \begin{equation}\label{eq:siminTxM}
T'=UT\ \mbox{ and } \  P_TM P_T= P_TU^*M' UP_T.
  \end{equation}

Let $\Fe\subseteq \Fe(\Ha,\Ka,n)$ and let $r:=\sup_{F\in \Fe} r(F)$. Fix a subspace $\Ha_0\subseteq \Ha$ with $\dim(H_0)=r$ and let $\Te$ and $\Me$ be as above.  Let $S_\Fe=\{(T,M)\in\Te\times\Me,\ T^*MT\in \Fe\}$ 
and let $\Me_\Fe$ and $\Te_\Fe$ be the projections of $S_\Fe$ in $\Me$ and $\Te$.
We also denote by $\Ha_\Fe$ the subspace in $\Ha$ generated by the ranges of $T^*$ for $T\in \Te_\Fe$.

Note that if $\Fe$ is closed, then $S_\Fe$, $\Me_\Fe$ and $\Te_\Fe$ are compact. Indeed, since both $\Me$ and $\Te$ are compact sets, so is the product $\Te\times \Me$. The map 
$(T,M)\mapsto  T^*M T$ is  continuous and $S_\Fe$ is the pre-image of $\Fe$, so that $S_\Fe$ is a closed subset of $\Te\times \Me$. It follows that $S_\Fe$ is compact and so are the projections  $\Me_\Fe$ and $\Te_\Fe$.

%
%
%

\begin{theorem}\label{thm:faces} Let $\Fe\subseteq \Fe(\Ha,\Ka,n)$ and let $\sup_{F\in \Fe} r(F)=r$. Then $\Fe$ is a face of $\Fe(\Ha,\Ka,n)$ if and only if the following conditions hold:
\begin{enumerate}
\item[(i)] $\dim(\Ha_\Fe)=r$. 
\item[(ii)] $\Me_\Fe$  is an $ B(\Ha_0)$-face of $\Me$.
\item[(iii)] Let $M^1,\dots, M^m\in \Me_\Fe$, $X_1,\dots,X_m\in B(\Ha_0)$ be such that $\sum_iX_i^*X_i=I$ and 
$T\in \Te$. Then $(T,\sum_i X_i^*M^i X_i)\in S_\Fe$ if and only if $(\mu_{X_i}^{-1/2}X_iT,M^i)\in S_\Fe$ for all $i$ such that  $\mu_{X_i}^2:=\Tr T^*X_i^*X_iT\ne 0$.

%
%
\end{enumerate}

\end{theorem}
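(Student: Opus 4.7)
My plan is to prove both directions by exploiting the translation between convex combinations of process POVMs and $B(\Ha_0)$-convex combinations of POVMs: formulas (\ref{eq:S})--(\ref{eq:M}) assemble representations of a family $\{F^i\}$ into a minimal representation of $\sum_i\lambda_iF^i$, while (\ref{eq:I})--(\ref{eq:muTiFi}) perform the reverse, expressing $T^*MT$ as a convex combination of process POVMs whenever $M$ itself carries a $B(\Ha_0)$-convex structure with common input $T$. These dictionaries will transport the face property back and forth between the two settings.

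For the forward direction, assume $\Fe$ is a face. Condition (i) will follow from the convexity of $\Fe$: picking any $F_0$ in the relative interior of $\Fe$, every $F\in\Fe$ satisfies $F_0=\alpha F+(1-\alpha)F'$ for some $F'\in\Fe$ and $\alpha\in(0,1)$, whence $\supp(\sigma_F)\subseteq\supp(\sigma_{F_0})$; so $\Ha_\Fe=\supp(\sigma_{F_0})$, and combining $r(F_0)\le r$ with the existence of some $F^*\in\Fe$ of rank $r$ forces $\dim\Ha_\Fe=r$. For (ii) I would take $M\in\Me_\Fe$ with witness $T\in\Te_\Fe$ (so $T^*MT\in\Fe$) and a proper decomposition $M=\sum X_i^*N^iX_i$; applying (\ref{eq:muTiFi}), the process POVM $T^*MT$ becomes the convex combination $\sum_i\mu_i G^i$ of process POVMs with strictly positive weights (properness of $X_i$ together with $T\ne 0$ forces $X_iT\ne 0$), so the face property puts each $G^i$ in $\Fe$ and hence each $N^i$ in $\Me_\Fe$. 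Condition (iii) reduces to the identity $T^*(\sum X_i^*M^iX_i)T=\sum_i\mu_{X_i}^2F^i$ together with $\sum_i\mu_{X_i}^2=1$; the resulting convex combination and the face property then yield the stated equivalence.

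For the backward direction, assume (i)--(iii) and suppose $F\in\Fe$ decomposes as $F=\lambda F^1+(1-\lambda)F^2$ with $F^i\in\Fe(\Ha,\Ka,n)$. Since $\supp(\sigma_{F^i})\subseteq\supp(\sigma_F)$ and $\dim\Ha_\Fe=r$ by (i), both $F^i$ admit representations $(T_i,M^i)$ on $\Ha_0$; assembling these via (\ref{eq:S})--(\ref{eq:M}) and extending by Remark \ref{rem:ext} yields $F=\widetilde S^*\widetilde N\widetilde S$ with $\widetilde N=\sum_i\widetilde X_i^*\widetilde M^i\widetilde X_i\in\Me_\Fe$. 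To apply (iii) and conclude $F^i\in\Fe$ I need $\widetilde M^i\in\Me_\Fe$, and this is the main obstacle: the induced decomposition is typically non-proper when $r(F^i)<r(F)$, so condition (ii) does not apply directly. I would resolve this via Lemma \ref{lemma:simored_face} applied to the compact $B(\Ha_0)$-face $\Me_\Fe$ (condition (ii) is exactly what certifies the face property, and compactness follows from closedness of $\Fe$): it furnishes $L^i\in\Me$ such that $\widetilde M^{i,\star}:=Q_i\widetilde M^iQ_i+Q_i^\perp L^iQ_i^\perp\in\Me_\Fe$, where $Q_i$ is the range projection of $\widetilde X_i$. Because $\widetilde X_i^*Q_i^\perp=0$, the identity $\widetilde N=\sum_i\widetilde X_i^*\widetilde M^{i,\star}\widetilde X_i$ persists with components now in $\Me_\Fe$, and a short computation using $\widetilde X_i\widetilde S=\lambda_i^{1/2}T_i$ together with $Q_i^\perp T_i=0$ shows that the $i$-th piece in (iii) evaluates back to $F^i$. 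Condition (iii) then yields $F^i\in\Fe$, establishing the face property.
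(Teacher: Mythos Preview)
Your forward direction is essentially the paper's argument. In the backward direction, however, you prove only the extremality half of the face property: assuming $F\in\Fe$ with $F=\lambda F^1+(1-\lambda)F^2$, you conclude $F^i\in\Fe$. You never verify that $\Fe$ is \emph{convex}, and this is a genuine gap---a face is by definition a convex subset. The paper handles convexity as a separate first step: given $F^1,F^2\in\Fe$ with $(T_i,M^i)\in S_\Fe$, it builds the minimal representation $(\Ha_S,S,N)$ via (\ref{eq:S})--(\ref{eq:M}), uses (i) to transport into $\Ha_0$ by a unitary $U:\Ha_\Fe\to\Ha_0$, and then---crucially---pads the resulting two-term combination with a third coefficient $Y_3=I_{\Ha_0}-P_T$ and an arbitrary $M^3\in\Me_\Fe$ so that $\sum_iY_i^*Y_i=I_{\Ha_0}$. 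Because $M^1,M^2\in\Me_\Fe$ by hypothesis (and $M^3\in\Me_\Fe$ by choice), the \emph{if} direction of (iii) applies and yields $(T,M)\in S_\Fe$, i.e., $\lambda F^1+(1-\lambda)F^2\in\Fe$. Your argument only ever invokes the \emph{only if} direction of (iii); the other direction is precisely what encodes convexity.

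A smaller technical point: ``extending by Remark~\ref{rem:ext}'' does not by itself produce $\sum_i\widetilde X_i^*\widetilde X_i=I_{\Ha_0}$ when $r(F)<r$; extending the $X_i$ by zero off $\Ha_S$ yields only $\sum_i\widetilde X_i^*\widetilde X_i=P_{\Ha_S}$. The same padding with $Y_3=I-P_T$ (and a dummy $M^3\in\Me_\Fe$) is needed in your face-extremality step as well. Once this is in place, your use of Lemma~\ref{lemma:simored_face} to replace the $\widetilde M^i$ by elements $Q_i\widetilde M^iQ_i+Q_i^\perp L^iQ_i^\perp\in\Me_\Fe$ before invoking (iii) is exactly the paper's argument.
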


\begin{rem} Recall that a $B(\Ha_0)$-face is not necessarily $B(\Ha_0)$-convex. The condition (iii) in the above theorem specifies those $B(\Ha_0)$-convex combinations of elements of the face $\Me_\Fe$ that it contains. 
\end{rem}

\begin{proof} Suppose $\Fe$ is a face. The condition (i) is a consequence of convexity of $\Fe$. Indeed, let $F$ be an interior point of $\Fe$ and let
$(T,M)$ be a corresponding pair, then $\sum_jF_j= T^*T=:\sigma\in \states(\Ha)$. Let $T'$ be any  element in $\Te_\Fe$. Then there is some $M'$ such that $(H_0,T',M')\simeq F'\in \Fe$ and $(1+s)F-sF'\in \Fe$ for some $s>0$. 
This implies  $(1+s)\sigma-s\sigma'\in\states(\Ha)$, where $\sigma'=(T')^*T'$.
Consequently, the support of $\sigma'$, which is the same as the range of $(T')^*$, is contained in the support of $\sigma$. It follows that the support of $\sigma$ is equal to $\Ha_\Fe$ and (i) holds.

To prove (ii), let $M=\sum_i X_i^*M^i X_i\in\Me_\Fe$ be a proper $ B(\Ha_0)$-convex combination of $M^i\in \Me$. Then there is some $T\in \Te$ such that $(\Ha_0,T,M)\simeq F\in \Fe$ and $F=\sum_{i\in \mathcal I}\mu_i F^i$, where $\mathcal I$ and 
$F^i\simeq (\Ha_0,T_i,M^i)$ are  as in (\ref{eq:I}) and
 (\ref{eq:muTiFi}). Since $\Fe$ is a face, $F^i\in \Fe$ and hence $M^i\in \Me_\Fe$ (note that all $i$ are in $\mathcal I$ since $X_i$ is invertible). It follows that $\Me_\Fe$ is a $ B(\Ha_0)$-face of $\Me$.

For (iii), let $\sum_i  X_i^*M^i X_i=M$ and let $T\in\Te$ be such that $(t_iX_iT,M^i)\in S_\Fe$ for all $i\in \mathcal I$. Let $F^i\simeq (\Ha_0,t_iX_iT,M^i)$ for $i\in \mathcal I$, then $F^i\in \Fe$. Put $\mu_i:=t_i^{-2}=\Tr T^*X_i^*X_iT$, then $\sum_{i\in\mathcal I} \mu_i=1$ and since $\Fe$ is convex, we have $\sum_{i\in \mathcal I} \mu_i F^i\in \Fe$ and
 \[
\sum_i\mu_i F_i=\sum_i \mu_it_i^2T^*X_i^*M^iX_iT=T^*MT
 \]
so that $(T,M)\in S_\Fe$. For the converse of (iii),
let $T\in \Te$ be such that $T^*MT=F\in \Fe$. Then $F=\sum_{i\in \mathcal I}\mu_i F^i$ as in (\ref{eq:muTiFi}). Since $\Fe$ is a face, this 
 implies $F^i\in \Fe$ and $(\mu_{i}^{-1/2}X_{i}T,M^{i})\in S_\Fe$ for all $i\in \mathcal I$.

Conversely, assume (i)-(iii) hold. Let $F^1,F^2\in \Fe$, $(H_0,T_i,M^i)\simeq F^i$ and let  $\lambda F^1+(1-\lambda)F^2=F$ for some $\lambda\in (0,1)$. 
Then $F\simeq(H_{S},S,N)$ as in (\ref{eq:S})-(\ref{eq:M}). We have $H_{S}\subseteq H_\Fe$ and by (i), there is a unitary  $U:H_\Fe\to H_0$. Put $T:=US\in \Te$ and
\[
Y_i:=X_iU^*,\ i=1,2,\quad Y_{3}=I_{H_0}-P_T
\]
then $Y_i\in B(H_0)$, $\sum_i Y_i^*Y_i=I$ and 
\[
\lambda^{-1/2}Y_1T=T_1,\quad (1-\lambda)^{-1/2}Y_2T=T_2, \quad  Y_3T=0.
\]
Put $M=\sum_i Y_i^*M^i Y_i$ where $M^3$ is any element in $\Me_\Fe$, then  it follows by (iii) that  $(T, M)\in S_\Fe$. Since clearly $T^*MT=S^*NS$, we have
 $F\simeq (H_{S},S,N)\simeq (H_0, T, M)$ is in $\Fe$, so that $\Fe$ is convex.

To prove that $\Fe$ is a face, let $F^1,F^2\in \Fe(\Ha,\Ka,n)$ and $\lambda\in (0,1)$ be such that 
$\lambda F^1+(1-\lambda)F^2=F\in \Fe$.  It is clear that then $r(F^i)\le r(F)\le r$, so that there are some $(T_i,M^i)\in \Te\times \Me$ corresponding to $F^i$. Then $F$ has a minimal representation $(\Ha_{S},S,N)$ as in  (\ref{eq:S})-(\ref{eq:M}) and since  $F\in \Fe$, $F\simeq (\Ha_0,T,M)$ for some $(T,M)\in S_\Fe$. By Lemma 
\ref{lemma:minimal_testers}, there is an isometry $V:\Ha_{S}\to \Ha_0$ such that $T=VS$ and $N= V^*M V$. 
Exactly as before, we put $Y_i=X_iV^*$ for $i=1,2$ and $Y_3=I-P_T$. Then $Y_1,Y_2,Y_3\in B(\Ha_0)$,  $\sum_iY_i^*Y_i=I_{\Ha_0}$ 
and 
\[
M':=\sum_{i=1}^3 Y^*_iM^i Y_i,
\]
with arbitrary $M^3\in \Me_\Fe$  satisfies
$T^*M'T=S^*NS$,
hence $(T,M')\in S_\Fe$. By (ii) and Lemmas \ref{lemma:nto2} and \ref{lemma:simored_face}, there are elements 
$R^i\in \Me$, $i=1,2$ such that 
\[
N^i:= Q_iM^iQ_i+Q_i^\perp R^iQ_i^\perp\in \Me_\Fe
\]
where $Q_i$ is the range projection of $Y_i$. Then with $N^3=M^3$, we have 
\[
M'=\sum_{i=1}^3 Y_i^*N^i Y_i
\]
By (iii),  $(t_iY_iT,N^i)\in S_\Fe$, with
\[
t_i^{-2}=\Tr T^*Y_i^*Y_iT= \Tr S^*X_i^*X_iS=\lambda_i \Tr T^*_iT_i=\lambda_i,
\]
where $\lambda_1=\lambda$, $\lambda_2=1-\lambda$. Moreover, 
\[
t_i^2T^*Y_i^*N^iY_iT=t_i^{-2}T^*Y_i^*M^iY_iT=t_i^{-2}SX_i^*M^iX_iS=T_i^*M^iT_i=F^i.
\]
It follows that $F^i\in \Fe$, hence $\Fe$ is a face of $\Fe(\Ha,\Ka,n)$.

\end{proof}
\subsection{The face generated by a process POVM with $B(\Ha_0)$-extremal measurement}

Let $M$ be a $B(\Ha_0)$-extremal POVM on $\Ka\otimes \Ha_0$  and let $T:\Ha\to \Ha_0$ be a surjective linear map. Then $(\Ha_0,T,M)$ is a minimal representation of some process POVM $F=T^*MT$. As an illustration of our results, we will describe the face of $\Fe(\Ha,\Ka,n)$, generated by $F$. 

Let $M$ be any element of $\Me(\Ha,n)$. A Naimark representation of $M$ is a triple $(\tilde \Ha, E,J)$, where $\tilde \Ha$ is some Hilbert space,   $E\in \Me(\tilde \Ha,n)$ is projection valued and 
$J:\Ha\to \tilde \Ha$ is an isometry such that $M=J^*EJ$. If moreover the set 
$\{E_kJ\xi,\ k=1,\dots,n, \xi\in\Ha\}$ spans $\tilde\Ha$, such a representation is called minimal. In fact,
$(c_1,\dots,c_n)\mapsto \sum_j c_j M_j$ defines a completely positive map $\mathbb  C^n\to B(\Ha)$ and $(\tilde \Ha, E, J)$ is a (minimal) Stinespring representation of this map, see for example \cite{paulsen2002completely}.

We will denote by $\{E\}'$  the commutant of $\{E_1,\dots,E_n\}$ in $B(\tilde \Ha)$.

\begin{lemma}\label{lemma:L_M}  Let $M\in \Me(\Ha,n)$, with minimal Naimark representation $M=J^*EJ$, and let $X\in B(\Ha)$. Then
$X^*MX\le M$ if and only if there is some $C\in \{E\}'$, $\|C\|\le 1$, such that $CJ=JX$.

\end{lemma}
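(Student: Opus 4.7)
The plan is to prove the two implications separately, with the nontrivial direction relying on the minimality of the Naimark dilation.

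For the easy direction ($\Leftarrow$), assume $C\in\{E\}'$ with $\|C\|\le 1$ and $CJ=JX$. Then for each $i$,
\[
X^*M_iX=(JX)^*E_i(JX)=J^*C^*E_iCJ=J^*E_iC^*CJ,
\]
where I used that the commutant is selfadjoint so $C^*$ also commutes with $E_i$. Since $0\le C^*C\le I$ commutes with the projection $E_i$, the difference $E_i-E_iC^*C=E_i(I-C^*C)$ is a product of two commuting positive operators and hence positive. Therefore $X^*M_iX\le J^*E_iJ=M_i$ for every $i$.

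For the hard direction ($\Rightarrow$), I will define $C$ on the spanning set $\{E_iJ\xi:i=1,\dots,n,\ \xi\in\Ha\}$ by
\[
C\Bigl(\sum_i E_iJ\xi_i\Bigr):=\sum_i E_iJX\xi_i.
\]
The key computation, which both gives well-definedness and controls the norm, uses that $E$ is projection valued, so $E_iE_j=\delta_{ij}E_i$:
\[
\Bigl\|\sum_i E_iJ\xi_i\Bigr\|^2=\sum_{i,j}\<\xi_i,J^*E_iE_jJ\xi_j\>=\sum_i\<\xi_i,M_i\xi_i\>.
\]
Applying the same identity with $X\xi_i$ in place of $\xi_i$ yields $\|\sum_i E_iJX\xi_i\|^2=\sum_i\<\xi_i,X^*M_iX\xi_i\>$. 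The hypothesis $X^*M_iX\le M_i$ gives
\[
\Bigl\|\sum_i E_iJX\xi_i\Bigr\|^2\le\Bigl\|\sum_i E_iJ\xi_i\Bigr\|^2,
\]
so $C$ is well defined on the spanning set and extends by continuity to a bounded operator on $\tilde\Ha$ with $\|C\|\le 1$.

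It remains to check the two properties of $C$. Using $\sum_iE_i=I_{\tilde\Ha}$, for $\xi\in\Ha$ I get $CJ\xi=C(\sum_iE_iJ\xi)=\sum_iE_iJX\xi=JX\xi$, giving $CJ=JX$. For commutativity, on the spanning set
\[
CE_j\Bigl(\sum_i E_iJ\xi_i\Bigr)=C(E_jJ\xi_j)=E_jJX\xi_j=E_j\sum_iE_iJX\xi_i=E_jC\Bigl(\sum_i E_iJ\xi_i\Bigr),
\]
so $C\in\{E\}'$. The only substantive step is the norm identity, which reduces everything to a single calculation thanks to orthogonality of the $E_i$; this is where projection-valuedness of $E$ (i.e.\ minimality of a \emph{Naimark}, as opposed to Stinespring, dilation) is actually used.
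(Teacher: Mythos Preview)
Your proof is correct and takes a genuinely different route from the paper. For the nontrivial direction, the paper invokes Arveson's Radon--Nikodym theorem for completely positive maps: from $X^*MX\le M$ it obtains $B\in\{E\}'$ with $0\le B\le I$ and $X^*M_jX=J^*E_jBJ$, then compares the minimal Stinespring representation of $\Phi_X$ with $(\tilde\Ha_B,Q,B^{1/2}J)$ to produce a partial isometry $W\in\{E\}'$ with $WB^{1/2}J=JX$, and sets $C=WB^{1/2}$. Your argument bypasses all of this by writing $C$ down explicitly on the spanning vectors $\sum_iE_iJ\xi_i$ and reducing well-definedness and the norm bound to the single identity $\|\sum_iE_iJ\xi_i\|^2=\sum_i\<\xi_i,M_i\xi_i\>$, which holds because the $E_i$ are mutually orthogonal projections. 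This is more elementary and self-contained; the paper's approach, on the other hand, makes visible the link to the general Radon--Nikodym theory of CP maps, which is conceptually useful even if heavier machinery than is needed here. One small comment on your closing remark: projection-valuedness of $E$ is what gives the norm identity, while minimality (the spanning property) is what makes your $C$ defined on all of $\tilde\Ha$; both ingredients are in play, so it is not quite accurate to say the norm identity is the only substantive step.
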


\begin{proof} Note that  $\Phi_X: (c_1,\dots,c_n)\mapsto \sum_j c_jX^*M_jX$ defines a completely positive map $\mathbb C^n\to B(\Ha)$ with a Stinespring representation $(\tilde H, E, JX)$ and the condition $X^*MX\le M$ is equivalent to a natural ordering of the corresponding maps. By the 
Radon-Nikodym theorem for completely positive maps (\cite[Theorem 1.4.2.]{arveson1969I}), there is some element $0\le B\le I$ in $\{E\}'$, such that 
\[
X^*M_jX=J^*E_jBJ=J^*B_jJ,
\]
where $B_j=E_jB$.  Let  $Q_j$ denote the support projection of 
$B_j$, then  $Q=\{Q_1,\dots, Q_n\}$ is a projection valued measure on the range 
$\tilde \Ha_B$ of $B$. 
 It follows that
$(\tilde \Ha_B,Q,B^{1/2}J)$ is another Stinespring representation of $\Phi_X$, which is obviously minimal. It follows that there is a partial isometry $W\in B(\tilde \Ha)$ with initial space $\tilde \Ha_B$, 
such that $WB^{1/2}J=JX$ and
$W^*E_jW=Q_j\le E_j$ for all $j$. This implies that $W\in \{E\}'$. Put $C:=WB^{1/2}$, then $C\in \{E\}'$, $\|C\|\le 1$ and $CJ=JX$.
Conversely, if $JX=CJ$, for $C$ as above, then for all $j$, 
\[
X^*M_jX=X^*J^*E_jJX=J^*C^*E_jCJ=J^*C^*CE_jJ\le J^*E_jJ=M_j.
\]
\end{proof}

For $M\in \Me$, we will  denote $\mathcal L_M:=\{X\in B(\Ha_0),\ X^*MX\le M\}$. It is easy to see by Lemma \ref{lemma:L_M} that $\Le_M$ is a subalgebra (but not necessarily a *-subalgebra) in $B(\Ha_0)$.  If $M$ is $B(\Ha_0)$-extremal 
in $\Me$ and $X\in \Le_M$, then 
$0\le M_j-X^*M_jX\le M_j$ for all $j$, so that by Lemma \ref{lemma:char_extreme} there is some $Y\in B(\Ha_0)$ such that  $M_j-X^*M_jX=Y^*M_jY$, $\forall j$. It is clear that $X^*MX+Y^*MY=M$, $X^*X+Y^*Y=I$ and $Y\in \Le_M$.  

Let us now fix some $B(\Ha_0)$-extremal element $M\in \Me$ and some surjective $T\in \Te$. Let
\[
\Fe_{T,M}:=\{ \mu^{-1}_X T^*X^*MXT,\ X\in \mathcal L_M,\ \mu_X:=(\Tr T^*X^*XT)\ne 0\}.
\]
In the rest of this section, we will prove that $\Fe_{T,M}$ is the smallest face of $\Fe(\Ha,\Ka,n)$ containing  
$F=T^*MT$. We will first find the projection  $\Me_{T,M}:=\Me_{\Fe_{T,M}}$ of $\Fe_{T,M}$ in $\Me$.

\begin{lemma}\label{lemma:M_F} $\Me_{T,M}$ is the set of all $N\in \Me$, such that there exists a projection  $Q\in B(\Ha_0)$ and  a unitary $U\in B(\Ha_0)$, satisfying
\[
 QU^*NUQ= QM=M Q.
\]

\end{lemma}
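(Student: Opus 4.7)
The plan is to prove the two inclusions separately, starting with the easier one.

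First I would address the direction ``if $N$ satisfies the equalities $QU^*NUQ=QM=MQ$ for some $(Q,U)$, then $N\in\Me_{T,M}$.'' Assume $N,Q,U$ are as in the statement, with $Q$ taken nonzero (the case $Q=0$ being vacuous the way the condition is phrased). Since $Q$ commutes with $M$, we get $Q^*M_jQ=QM_jQ=QM_j\le M_j$, so $Q\in\Le_M$. Surjectivity of $T$ together with $Q\neq 0$ forces $\mu_Q:=\Tr T^*QT>0$, so I set $T':=\mu_Q^{-1/2}UQT$. A direct check gives $T'\in\Te$, and
\[
(T')^*NT'=\mu_Q^{-1}T^*QU^*NUQT=\mu_Q^{-1}T^*QMQT=\mu_Q^{-1}T^*Q^*MQT
\]
is the element of $\Fe_{T,M}$ corresponding to $X:=Q$, so $N\in\Me_{T,M}$.

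For the harder direction, if $N\in\Me_{T,M}$ then by definition there exist $T'\in\Te$ and $X\in\Le_M$ with $\mu_X\neq 0$ such that $F':=(T')^*NT'=\mu_X^{-1}(XT)^*M(XT)$. I view $F'$ via the two representations $(\Ha_0,T',N)$ and $(\Ha_0,\mu_X^{-1/2}XT,M)$ and pass to their minimal versions using Remark \ref{rem:minimal}: they are $(\Ha_{T'},T',P_{T'}NP_{T'})$ and $(\Ha_X,\mu_X^{-1/2}XT,P_XMP_X)$, where $\Ha_X=X\Ha_0$ uses surjectivity of $T$ and $P_{T'},P_X$ are the corresponding projections in $B(\Ha_0)$. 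Lemma \ref{lemma:minimal_testers} produces a unitary $W:\Ha_{T'}\to\Ha_X$ with $WT'=\mu_X^{-1/2}XT$ and $W^*(P_XMP_X)W=P_{T'}NP_{T'}$. Since $\dim\Ha_{T'}=\dim\Ha_X=r(F')$, I extend $W$ to a unitary $U\in B(\Ha_0)$ satisfying $UP_{T'}U^*=P_X$, equivalently $UP_{T'}=P_XU$. A short rearrangement of the extended identity then gives
\[
P_X\,UNU^*\,P_X=P_XMP_X.
\]

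Setting $Q:=P_X$ and $\widetilde U:=U^*$ turns this into $Q\widetilde U^*N\widetilde U Q=QMQ$. What remains, and what I expect to be the main obstacle, is the commutation $P_XM=MP_X$, i.e.\ that the range projection of any $X\in\Le_M$ commutes with every $M_j$ whenever $M$ is $B(\Ha_0)$-extremal. To obtain this, I would exploit the $Y$ that $B(\Ha_0)$-extremality pairs with $X$ via Lemma \ref{lemma:char_extreme} (as noted just before Lemma \ref{lemma:M_F}), giving $X^*MX+Y^*MY=M$ and $X^*X+Y^*Y=I$, so that $(X,Y)^\top:\Ha_0\to\Ha_0\oplus\Ha_0$ is an $M$-preserving isometry. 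Combining this with the central decomposition of the commutant $\{M\}'\cap B(\Ha_0)$, the problem reduces to the $B(\Ha_0)$-pure case, where Lemma \ref{lemma:char_pure} forces $\Le_M$ to consist only of scalar multiples of unitaries and the range projection is automatically $0$ or $I$. Once this commutation is established, $P_XMP_X=P_XM=MP_X$ combines with the displayed identity to yield $Q\widetilde U^*N\widetilde U Q=QM=MQ$, as required.
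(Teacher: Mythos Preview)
Your easy direction is correct and matches the paper's argument exactly.

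The hard direction has a genuine gap. You correctly reach the identity $P_X\widetilde U^*N\widetilde U P_X=P_XMP_X$ with $\widetilde U$ unitary in $B(\Ha_0)$, but the remaining step --- showing that $P_X$ itself commutes with $M$ --- is not established by your sketch, and the paper does \emph{not} prove this stronger claim. Your proposed reduction via the ``central decomposition of $\{M\}'\cap B(\Ha_0)$'' does not reduce to the pure case: decomposing along minimal \emph{central} projections of $\mathcal B:=\{M\}'\cap B(\Ha_0)$ yields blocks on which $\mathcal B$ restricts to a full matrix factor $M_{m_l}$, not to scalars, so the restriction of $M$ is still $B(\Ha_0)$-reducible whenever $m_l\ge 2$. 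Even if $\mathcal B$ happens to be abelian, there is no reason an arbitrary $X\in\Le_M$ should be block-diagonal with respect to the minimal projections of $\mathcal B$, which is what you would need to feed each block into Lemma~\ref{lemma:char_pure}. In short, the reduction neither forces purity on the blocks nor controls the off-diagonal part of $X$.

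The paper circumvents this entirely. Starting from the $B(\Ha_0)$-convex decomposition $M=X^*MX+Y^*MY$ (your $Y$), it applies Lemmas~\ref{lemma:extreme_face} and \ref{lemma:simored_face} to the compact $B(\Ha_0)$-face $\Fe_M=\{U^*MU\}$: this produces a unitary $W\in B(\Ha_0)$ and some $R\in\Me$ with
\[
W^*MW=P_XMP_X+P_X^\perp R P_X^\perp.
\]
Now $Q:=WP_XW^*$ (a unitary conjugate of $P_X$, not $P_X$ itself) commutes with $M$, and one takes $U:=VW^*$ where $V$ is the unitary you obtained from the equivalence of minimal representations. This yields $QU^*NUQ=QM=MQ$ directly. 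The moral is that Lemma~\ref{lemma:simored_face} already supplies the needed commutation after an extra unitary twist, so one never has to argue that the range projection of an arbitrary $X\in\Le_M$ lies in $\{M\}'$.
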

\begin{proof}
Assume $N\in \Me_{T,M}$, then there is some $S\in \Te$ and $X\in \mathcal L_M$ such that $(S,N)\simeq (\mu_X^{-1/2} XT,M)$. Hence by (\ref{eq:siminTxM}), there is 
 a unitary $V\in B(\Ha_0)$ such that $S=\mu_X^{-1/2} VXT$ and $ P_XMP_X= P_XV^*N VP_X$, note that since $T$ is surjective $P_X=P_{XT}$.
  Let $Y\in \mathcal L_M$ be such that $X^*M X+ Y^*M Y=M$. This is a $B(\Ha_0)$-convex combination, so by 
 lemmas \ref{lemma:extreme_face} and \ref{lemma:simored_face}, we obtain that 
\[
W^*MW= P_XM P_X+ P_X^\perp R P_X^\perp,
\]
for some unitary $W\in B(\Ha_0)$ and some $R\in \Me$. Put $Q:=WP_XW^*$, $U:=VW^*$ and $R'=WRW^*$. Then we have
\[
 M=QU^*N UQ+Q^\perp R'Q^\perp,
\]
so that $ QU^*N UQ= QM=MQ$.
 
 Conversely, let $N$, $Q$ and $U$ be as in the lemma, then clearly $Q\in \mathcal L_M$ and $(\mu_Q^{-1/2} QT,M)\simeq (\mu_Q^{-1/2} UQT,N)$ defines an element in $\Fe_{T,M}$, so that $N\in \Me_{T,M}$. 
\end{proof}

\begin{lemma} $\Me_{T,M}$ is a $ B(\Ha_0)$-face of $\Me$.

\end{lemma}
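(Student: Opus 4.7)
The plan is to verify, via the characterization in Lemma \ref{lemma:M_F}, that each summand of a proper $B(\Ha_0)$-convex decomposition of an element of $\Me_{T,M}$ again lies in $\Me_{T,M}$. Let $N \in \Me_{T,M}$ be written as a proper $B(\Ha_0)$-convex combination $N = \sum_j X_j^* N^j X_j$ with invertible $X_j \in B(\Ha_0)$ and $\sum_j X_j^* X_j = I$, and fix an index $i$. By Lemma \ref{lemma:M_F} there exist a non-zero projection $Q \in B(\Ha_0)$ commuting with $M$ and a unitary $U \in B(\Ha_0)$ such that $Q U^* N U Q = Q M$. Substituting the decomposition and setting $Y_j := X_j U Q \in B(\Ha_0)$ yields $\sum_j Y_j^* N^j Y_j = Q M$.

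The key observation is that, because $Q$ commutes with $M$, one has $M = Q M + Q^\perp M Q^\perp$, so that $M = \sum_j Y_j^* N^j Y_j + (Q^\perp)^* M\, Q^\perp$, while $\sum_j Y_j^* Y_j + (Q^\perp)^* Q^\perp = Q + Q^\perp = I$. This displays $M \in \Fe_M$ as a $B(\Ha_0)$-convex combination involving each $N^j$. By Lemma \ref{lemma:extreme_face} the set $\Fe_M = \{V^* M V : V \text{ unitary in } B(\Ha_0)\}$ is a compact $B(\Ha_0)$-face containing $M$. Applying Lemma \ref{lemma:simored_face} to the $i$-th summand of this combination produces some $L \in \Me$ and a unitary $W \in B(\Ha_0)$ with
\[
P_i N^i P_i + P_i^\perp L\, P_i^\perp = W^* M W,
\]
where $P_i$ is the range projection of $Y_i$.

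The right-hand side being block-diagonal in $P_i$, the projection $P_i$ commutes with $W^* M W$, hence $Q' := W P_i W^*$ is a projection commuting with $M$. Moreover $Q' \neq 0$: since $X_i U$ is invertible and $Q \neq 0$, we have $Y_i \neq 0$, so $P_i \neq 0$. Compressing the above identity by $P_i$ gives $P_i N^i P_i = P_i W^* M W P_i$, and conjugating by $W$ (using that $Q'$ commutes with $M$, so $Q' M Q' = Q' M$) yields $Q' (W N^i W^*) Q' = Q' M$. Setting $U' := W^*$, this reads $Q' (U')^* N^i U' Q' = Q' M$, which by Lemma \ref{lemma:M_F} gives $N^i \in \Me_{T,M}$. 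Since $i$ was arbitrary, $\Me_{T,M}$ is a $B(\Ha_0)$-face.

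The main obstacle is locating a projection that commutes with $M$ in the characterization. A direct application of Lemma \ref{lemma:char_extreme} to $Y_i^* N^i Y_i \leq M$ would produce some $Z \in \Le_M$ with $Y_i^* N^i Y_i = Z^* M Z$, but the range of $Z$ need not commute with $M$, so this route is insufficient. The resolution is to appeal instead to Lemma \ref{lemma:simored_face} applied to the face $\Fe_M$; the trick that makes this available is to use the commutation $Q M = M Q$ to package $M$ itself as the residual summand $(Q^\perp)^* M\, Q^\perp$, so that the lemma automatically returns a unitary conjugate of $P_i$ that commutes with $M$.
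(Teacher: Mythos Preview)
Your proof is correct and follows essentially the same approach as the paper: you use Lemma~\ref{lemma:M_F} to obtain $Q$ and $U$, set $Y_j=X_jUQ$, write $M=\sum_j Y_j^*N^jY_j+Q^\perp MQ^\perp$ as a $B(\Ha_0)$-convex combination, and then apply Lemmas~\ref{lemma:extreme_face} and~\ref{lemma:simored_face} to the compact face $\Fe_M$ to extract a unitary $W$ and a projection $Q'=WP_iW^*$ commuting with $M$. Your explicit check that $Q'\ne 0$ and the closing commentary on why Lemma~\ref{lemma:simored_face} (rather than Lemma~\ref{lemma:char_extreme}) is the right tool are helpful additions not spelled out in the paper.
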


\begin{proof} Let $\sum_i  X_i^*M^i X_i=N\in \Me_{T,M}$ be a proper $ B(\Ha_0)$-convex combination of $M^i\in \Me$. Let $Q$ and $U$ be as in Lemma \ref{lemma:M_F}. 
Then $\sum_i  QU^*X_i^*M^i X_iUQ= QM$, so that 
\[
M=\sum_i  Y_i^*M^i Y_i+ Q^\perp M.
\]
By Lemmas  \ref{lemma:extreme_face} and \ref{lemma:simored_face},
for each $i$ there is some $R^i\in \Me$ such that 
\[
W_i^*MW_i= P_{Y_i}M^i P_{Y_i}+ P_{Y_i}^\perp R^i P_{Y_i}^\perp
\]
for some unitary $W\in B(H_0)$. Put $Q_i=WP_{Y_i}W^*$, then
\[
 Q_iWM^i W^*Q_i= Q_iM=M Q_i,
\]
so that $M^i\in \Me_{T,M}$.

\end{proof}

\begin{prop}\label{prop:smallestface} $ \Fe_{T,M}$ is the smallest face of $\Fe(\Ha,\Ka,n)$ containing $F=T^*MT$.

\end{prop}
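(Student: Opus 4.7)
The plan is to identify $\Fe_{T,M}$ with the smallest face $\Fe^F$ of $\Fe(\Ha,\Ka,n)$ containing $F=T^*MT$ by proving both inclusions. Note first that $F\in\Fe_{T,M}$ via $X=I$ (since $\mu_I=\Tr T^*T=1$). For $\Fe_{T,M}\subseteq\Fe^F$, I would decompose $F$ using the $B(\Ha_0)$-extremality of $M$: given $F'=\mu_X^{-1}T^*X^*MXT$ with $X\in\Le_M$, Lemma \ref{lemma:char_extreme} applied to $M_j-X^*M_jX$ (which satisfies $0\le M_j-X^*M_jX\le M_j$ with $\sum_j(M_j-X^*M_jX)=I-X^*X\in B(\Ha_0)$) yields a complementary $Y\in\Le_M$ with $X^*MX+Y^*MY=M$ and $X^*X+Y^*Y=I$. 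Then
\[
F=T^*MT=T^*X^*MXT+T^*Y^*MYT=\mu_XF'+\mu_YF'_Y,
\]
with $F'_Y=\mu_Y^{-1}T^*Y^*MYT\in\Fe_{T,M}$ when $\mu_Y>0$, and $F'=F$ when $\mu_Y=0$; either way $F'$ lies in every face of $\Fe(\Ha,\Ka,n)$ containing $F$.

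For the reverse inclusion $\Fe^F\subseteq\Fe_{T,M}$, I would appeal to the standard convex-geometry description of the minimal face: $\Fe^F=\{F\}\cup\{G\in\Fe(\Ha,\Ka,n):F=\alpha G+(1-\alpha)H\text{ for some }H\in\Fe(\Ha,\Ka,n),\ \alpha\in(0,1)\}$. So it suffices to show that any such $G$ (and symmetrically $H$) lies in $\Fe_{T,M}$. Given the decomposition, pick minimal representations $(\Ha_i,T_i,M^i)\simeq G,H$ with pure input states (Proposition \ref{lemma:ppovms_testers}) and extend them to the common ancilla $\Ha_0$ via Remark \ref{rem:ext}, so that $T_i\in\Te$ and $M^i\in\Me$. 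The construction (\ref{eq:S})--(\ref{eq:M}) with $\lambda_1=\alpha,\lambda_2=1-\alpha$ then produces a minimal representation $(\Ha_S,S,N)$ of $F$; since $(\Ha_0,T,M)$ is also minimal and $\dim\Ha_S=r(F)=\dim\Ha_0$, Lemma \ref{lemma:minimal_testers} yields a \emph{unitary} $V:\Ha_S\to\Ha_0$ with $T=VS$ and $N=V^*MV$. Setting $\tilde Y_i:=Y_iV^*\in B(\Ha_0)$ transfers the decomposition of $N$ into a $B(\Ha_0)$-convex combination $M=\sum_i\tilde Y_i^*M^i\tilde Y_i$ with $\sum_i\tilde Y_i^*\tilde Y_i=I_{\Ha_0}$ and $\tilde Y_iT=\lambda_i^{1/2}T_i$.

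The main obstacle is the final move: turning this $B(\Ha_0)$-convex decomposition of $M$ into a description of $G,H$ in the explicit form $\mu^{-1}T^*Z^*MZT$. This is precisely where the $B(\Ha_0)$-extremality of $M$ is essential: applying Lemma \ref{lemma:char_extreme} to $D^i_j:=\tilde Y_i^*M^i_j\tilde Y_i$ (which satisfy $D^i_j\le M_j$ because $D^1_j+D^2_j=M_j$, and $\sum_jD^i_j=\tilde Y_i^*\tilde Y_i\in B(\Ha_0)$) yields $Z_i\in B(\Ha_0)$ with $Z_i^*M_jZ_i=D^i_j$ for all $j$, whence $Z_i\in\Le_M$. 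A short computation using $T=VS$, $V^*V=I_{\Ha_S}$, and the identity $\tilde Y_iT=\lambda_i^{1/2}T_i$ then pins down $\mu_{Z_i}=\lambda_i>0$ and gives
\[
G_i=T_i^*M^iT_i=\lambda_i^{-1}T^*\tilde Y_i^*M^i\tilde Y_iT=\mu_{Z_i}^{-1}T^*Z_i^*MZ_iT\in\Fe_{T,M}.
\]
The delicate point is matching $\mu_{Z_i}$ with the original weight $\lambda_i$: this consistency relies both on the unitarity of $V$ (which itself required the ancilla dimension match $\dim\Ha_S=\dim\Ha_0$) and on tracking the trace computations faithfully through the construction.
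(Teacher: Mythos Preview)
Your argument is correct and is a genuinely different route from the paper's. The paper proves the proposition by invoking the abstract face characterization of Theorem~\ref{thm:faces}: it first shows, via condition~(iii) applied to an arbitrary face $\Fe\ni F$, that $\Fe_{T,M}\subseteq\Fe$; then it verifies that $\Fe_{T,M}$ itself satisfies conditions~(i) and~(iii) of Theorem~\ref{thm:faces} (condition~(ii) having been established separately as the lemma that $\Me_{T,M}$ is a $B(\Ha_0)$-face). The ``converse'' half of~(iii) is handled through the Naimark representation machinery of Lemma~\ref{lemma:L_M}. Your proof bypasses Theorem~\ref{thm:faces}, Lemma~\ref{lemma:M_F}, and Lemma~\ref{lemma:L_M} entirely: for $\Fe_{T,M}\subseteq\Fe^F$ you use only the complementary $Y\in\Le_M$ guaranteed by Lemma~\ref{lemma:char_extreme}; for $\Fe^F\subseteq\Fe_{T,M}$ you use the elementary description of the minimal face of a point, transport a two-term convex decomposition of $F$ to a $B(\Ha_0)$-decomposition of $M$ via Lemma~\ref{lemma:minimal_testers}, and then apply Lemma~\ref{lemma:char_extreme} once more to produce $Z_i\in\Le_M$. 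This is shorter and more self-contained for the specific statement; the paper's approach, on the other hand, situates the result inside the general face theory and yields the auxiliary structural information about $\Me_{T,M}$ along the way.

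Two minor points: your ``$\tilde Y_i:=Y_iV^*$'' should read $\tilde Y_i:=X_iV^*$, with $X_i$ the operators from~(\ref{eq:Xi}); and the unitarity of $V$ (hence $VV^*=I_{\Ha_0}$, needed for $\sum_i\tilde Y_i^*\tilde Y_i=I_{\Ha_0}$) relies on $\dim\Ha_S=r(F)=\dim\Ha_0$, which in turn uses that $T$ is surjective --- you note this, and it is indeed the only place where surjectivity of $T$ enters your argument.
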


\begin{proof} Let $\Fe$ be any face containing $F$. Let $X\in \mathcal L_M$ be such that $\mu_X\ne 0$ and let 
$Y\in \Le_M$ be such that  $M= X^*M X+ Y^*M Y$.
By Theorem \ref{thm:faces} (iii), we must have $(\mu_X^{-1/2} XT,M)\in S_\Fe$, so that $\Fe_{T,M}\subseteq \Fe$.
It is now enough to prove that $\Fe_{T,M}$ satisfies the conditions  (i) and (iii) from Theorem \ref{thm:faces}. 

For (i), note that $r=\sup_{F'\in \Fe_{T,M}}r(F')=\dim(\Ha_0)$ and if $(S,N)\in S_{\Fe_{T,M}}$, then $S=\mu_X^{-1/2} UXT$ for some unitary $U\in B(\Ha_0)$ and hence $S^*\Ha_0=T^*X^*U\Ha_0\subseteq T^*\Ha_0$, so that $\Ha_{\Fe_{T,M}}\subseteq T^*\Ha_0\subseteq 
\Ha_{\Fe_{T,M}}$.

To prove (iii), let $M^1,\dots,M^m\in \Me$, $X_1,\dots, X_m\in B(\Ha_0)$, $\sum_i X_i^*X_i=I$ and let $S\in \Te$. Put $N=\sum_i X_i^*M^i X_i$ and assume that  
 $(S,N)\in S_{\Fe_{M,T}}$.  As in the proof of Lemma \ref{lemma:M_F}, there is some $X\in \mathcal L_M$, a unitary $V\in B(\Ha_0)$ and some $M'\sim_{ B(\Ha_0)} M$ such    that $S=\mu_X^{-1/2} VXT$ and 
\[
 P_XV^*N VP_X= P_XM P_X= P_XM'=M' P_X
\]
Substituting for $N$ and putting $Y_i:=X_iVP_X$, we obtain
\[
\sum_i  Y_i^*M^i Y_i+ P_X^\perp M' P_X^\perp=M'
\]
By Lemmas \ref{lemma:extreme_face} and \ref{lemma:simored_face}, for each $i$, there is some $R^i\in \Me$  and a unitary $Z_i\in B(\Ha_0)$ such that
\[
 P_{Y_i}M^i P_{Y_i}+ P_{Y_i}^\perp R^i P_{Y_i}^\perp= Z_i^*MZ_i.
\]
It follows that 
\begin{equation}\label{eq:eq_j}
 P_{Y_i}M^i P_{Y_i} = P_{Y_i}Z_i^*M Z_iP_{Y_i}.
\end{equation}

Now note that $X_iS=\mu_X^{-1/2} X_iVXT=\mu_X^{-1/2} Y_iXT$ and $P_{X_iS}=P_{Y_i}$. If $\mu_i=\Tr S^*X_i^*X_iS\ne 0$, then by (\ref{eq:siminTxM}) and (\ref{eq:eq_j}),
\[
(\mu_i^{-1/2} X_iS,M^i)=(\mu_{Y_iX}^{-1/2} Y_iXT,M^i)\simeq (\mu_{Y_iX}^{-1/2}Z_iY_iXT,M)
\]
and we have
\begin{align*}
 X^*Y_i^*Z_i^*M Z_iY_iX&=  X^*Y_i^*M^i Y_iX
\le \sum_i  X^*Y_i^*M^i Y_iX\\
&=  X^*V^*N VX
=  X^*M X\le M.
\end{align*}
It follows that $Z_iY_iX\in \mathcal L_M$ and $(\mu_iX_iS,M^i)\in S_{\Fe_{T,M}}$.

Conversely, let $\mathcal I$ be the set of all $i$ such that $\mu_i=\Tr S^*X_i^*X_iS\ne 0$ and
suppose that  $(\mu_i^{-1/2}X_iS,M^i)\simeq F^i\in \Fe_{T,M}$ for all $i\in \mathcal I$. Then for $i\in \mathcal I$, $F^i\simeq (\mu_{Y_i}^{-1/2} Y_iT,M)$  for some $Y_i\in \mathcal L_M$, so that 
\[
S^*X_i^*M^iX_iS=s_iT^*Y_i^*MY_iT,
\]
where $s_i=\mu_i/\mu_{Y_i}$. Let $M=J^*EJ$ be a minimal Naimark representation of $M$. By Lemma \ref{lemma:L_M}, there are some $C_i\in \{E\}'$, $\|C_i\|\le 1$, such that $C_iJ=JY_i$, $i\in \mathcal I$. Note that then $J^*C_iJ=J^*JY_i=Y_i$ and $J^*C_i^*C_iJ=Y_i^*J^*JY_i= Y_i^*Y_i$. We then have for all $j=1,\dots,n$,
\begin{align*}
S^*N_jS&=\sum_{i\in\mathcal I} S^*X_i^*M_j^iX_iS=\sum_{i\in \mathcal I} s_i
T^*Y_i^*M_jY_iT=\sum_{i\in \mathcal I}s_i T^*Y_i^*J^*E_jJY_iT\\
&=\sum_{i\in \mathcal I} s_iT^*J^*C_i^*C_iE_jJT=T^*J^*CE_jJT,
\end{align*}
where $C:=\sum_{i\in \mathcal I} s_i C_i^*C_i$. Put $t:=\sum_is_i$, then
$0\le C\le tI$ and $C\in \{E\}'$, hence $D_j:=t^{-1}J^*CE_jJ$ satisfies 
$0\le D_j\le M_j$ and 
\[
\sum_j D_j=t^{-1}J^*CJ=t^{-1}\sum_{i\in \mathcal I}s_i Y_i^*Y_i\in B(\Ha_0).
\]
By Lemma \ref{lemma:char_extreme}, there is some $Y\in B(\Ha_0)$ such that 
$D_j=Y^*M_jY$ and it is clear that then $Y\in \mathcal L_M$. We therefore have
\[
S^*N_jS=T^*J^*CE_jJT=tT^*D_jT=tT^*Y^*M_jYT,\quad j=1,\dots,n,
\]
so that $(S,N)\simeq (t^{1/2}YT,M)\in S_{\Fe_{T,M}}$.
This proves (iii). 

\end{proof}

\begin{ex} Let $M\in \Me$ be a projection valued, then it is $B(\Ha_0)$-extremal by 
\ref{coro:PVM}. Let $T\in \Te$ be surjective. It is easy to see that $X^*MX\le M$ if and only if $X\in \{M\}'$, so that the face generated by $F$ is $\{(\mu_X^{-1/2} XT,M),\ X\in B(\Ha_0)\cap \{M\}'\}$.
\end{ex}

We also obtain  the following characterization of $B(\Ha_0)$-extremal POVMs.

\begin{coro}\label{coro:extrPOVM} An element $M\in \Me$ is $B(\Ha_0)$-extremal if and only if for any surjective $T\in \Te$ and any convex decomposition 
$F=\sum_i \lambda_i F^i$ of the corresponding process POVM $F=T^*MT$, we must have 
$F^i\simeq (\Ha_0,S_i,M)$ for some $S_i\in \Te$.  

\end{coro}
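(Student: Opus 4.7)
The plan is to use Proposition \ref{prop:smallestface} for one direction and the minimality lemmas \ref{lemma:rank} and \ref{lemma:minimal_testers} for the other.

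For the forward direction, assume $M$ is $B(\Ha_0)$-extremal and fix a surjective $T\in\Te$, so that $(\Ha_0,T,M)$ is a minimal representation of $F=T^*MT$. By Proposition \ref{prop:smallestface}, the smallest face of $\Fe(\Ha,\Ka,n)$ containing $F$ is $\Fe_{T,M}=\{\mu_X^{-1}T^*X^*MXT:\ X\in\mathcal L_M,\ \mu_X\ne 0\}$. Any convex decomposition $F=\sum_i\lambda_i F^i$ with $\lambda_i>0$ forces $F^i\in\Fe_{T,M}$ by the defining property of a face, so $F^i=\mu_{X_i}^{-1}T^*X_i^*MX_iT$ for some $X_i\in\mathcal L_M$. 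Setting $S_i:=\mu_{X_i}^{-1/2}X_iT\in\Te$ we obtain $F^i\simeq(\Ha_0,S_i,M)$, as required.

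For the converse, suppose the decomposition property holds and let $M=\sum_i X_i^*M^iX_i$ be a proper $B(\Ha_0)$-convex combination. I pick any surjective $T\in\Te$ and define
\[
\mu_i:=\Tr T^*X_i^*X_iT, \qquad T_i:=\mu_i^{-1/2}X_iT, \qquad F^i:=T_i^*M^iT_i.
\]
Since each $X_i$ is invertible and $T$ is surjective, each $T_i$ is surjective, so every $(\Ha_0,T_i,M^i)$ is a minimal representation of $F^i$ and $\sum_i\mu_i=1$. We have $F=T^*MT=\sum_i\mu_iF^i$, a genuine convex decomposition, so by hypothesis each $F^i\simeq(\Ha_0,S_i,M)$ for some $S_i\in\Te$. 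Lemma \ref{lemma:rank} applied to the minimal representation via $T_i$ yields $r(F^i)=\dim\Ha_0$, and then the same lemma applied to the pure-state representation $(\Ha_0,S_i,M)$ forces $\dim\Ha_0=r(F^i)$ for it to be minimal, which it is — hence $S_i$ is also surjective. Lemma \ref{lemma:minimal_testers} now produces a unitary $U_i\in B(\Ha_0)$ with $S_i=U_iT_i$ and $M^i=U_i^*MU_i$, that is, $M^i\sim_{B(\Ha_0)} M$. This proves $B(\Ha_0)$-extremality of $M$.

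The only slightly subtle point is verifying that both representations in play are minimal so that the channel in Lemma \ref{lemma:minimal_testers} is actually a unitary channel (rather than just an isometric one); this is handled by the rank identity in Lemma \ref{lemma:rank}. The rest is essentially bookkeeping once the correspondence between convex decompositions of $F$ and $B(\Ha_0)$-convex decompositions of $M$ (already set up in Section \ref{sec:extr}) is invoked.
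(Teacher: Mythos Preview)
Your proof is correct and follows essentially the same route as the paper: the forward direction via Proposition~\ref{prop:smallestface} and the converse by translating a proper $B(\Ha_0)$-convex decomposition of $M$ into a convex decomposition of $F$ as in (\ref{eq:muTiFi}), then invoking Lemma~\ref{lemma:minimal_testers}. The only difference is cosmetic: you explicitly verify minimality of $(\Ha_0,S_i,M)$ through Lemma~\ref{lemma:rank}, whereas the paper applies Lemma~\ref{lemma:minimal_testers} directly, relying on the fact that an isometry $\Ha_0\to\Ha_0$ in finite dimensions is already unitary.
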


\begin{proof} Assume that $M$ is $B(\Ha_0)$-extremal and let $F=\sum_i\lambda_i F^i$. Any $F^i$ is contained in the face of $\Fe(\Ha,\Ka,n)$ generated by $F$, so that $F^i\simeq (\Ha_0,T_i,M)$ for some $T_i\in \Te$, by Proposition \ref{prop:smallestface}.
For the converse, let $M=\sum_i X_i^*M^iX_i$ be a proper $B(\Ha_0)$-convex combination and assume that $M$ fulfills the condition. 
Let $T\in \Te$ be surjective and let $F=T^*MT$, then $F=\sum_{i} \mu_i F^i$, with $F^i\simeq (\Ha_0, T_i,M^i)$ as in  (\ref{eq:muTiFi}). Note that $\mu_i>0$ for all $i$ since all $X_i$ are invertible. On the other hand,  $F^i\simeq (\Ha_0,S^i,M)$ by assumption. 
By Lemma \ref{lemma:minimal_testers}, $M\sim_{B(\Ha_0)} M^i$  for all $i$. It follows that $M$ is $B(\Ha_0)$-extremal.

\end{proof}

\section*{Acknowledgement} This work was supported by the grants VEGA 2/0125/13 and 2/0059/12, as well as the Research and Development Support Agency under the
contract No. APVV-0178-11.

\section*{Appendix:  $\Ae$-extremal and $\Ae$-pure POVMs}

We now give the proofs of Lemmas \ref{lemma:simored_face} - \ref{lemma:char_pure}. We mostly follow the arguments used in the C*-convex case. The first easy lemma and its corollary are proved similarly as e.g. in \cite{lopa1981some}.

\begin{lemma}\label{lemma:nto2} Let $M\in \Me(\Ha,n)$, 
$M=\sum_i X_i^*M^iX_i$ be an $\Ae$-convex combination of $M^1,\dots,M^m\in \Me(\Ha,n)$. Then there is some $N\in \Me(\Ha,n)$ and $Y_1\in \Ae$ such that 
$M=X_1^*M^1X_1+Y_1^*NY_1$ is an $\Ae$-convex combination of $M^1$ and $N$, proper if all $X_i$ are invertible.

\end{lemma}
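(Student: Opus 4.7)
The plan is to reduce the $m$-term combination to a two-term combination by collapsing the tail $\sum_{i=2}^m X_i^*M^iX_i$ into a single expression $Y_1^*NY_1$. Set
\[
Y_1:=\Bigl(\sum_{i=2}^m X_i^*X_i\Bigr)^{1/2}.
\]
Since $\Ae$ is a C*-subalgebra, each $X_i^*X_i\in \Ae$, and continuous functional calculus keeps the square root inside $\Ae$, so $Y_1\in \Ae$. From $\sum_{i=1}^m X_i^*X_i=I$ we get $X_1^*X_1+Y_1^*Y_1=I$, so the desired decomposition $M=X_1^*M^1X_1+Y_1^*NY_1$ will indeed be an $\Ae$-convex combination once $N$ is constructed.

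Next, let $P$ be the projection onto the range of $Y_1$ and write $Y_1^{-1}$ for the Moore--Penrose inverse (inverse on the range of $Y_1$, zero on the kernel), so that $Y_1Y_1^{-1}=Y_1^{-1}Y_1=P$. The key observation is that for $i\ge 2$,
\[
\langle \xi,X_i^*X_i\xi\rangle\le \Bigl\langle \xi,\sum_{j=2}^m X_j^*X_j\xi\Bigr\rangle=\langle \xi,Y_1^2\xi\rangle,
\]
so $\ker Y_1\subseteq \ker X_i$, whence $X_iP=X_i$ (equivalently $PX_i^*=X_i^*$). Define $Z_i:=X_iY_1^{-1}\in B(\Ha)$ for $i=2,\dots,m$; then $Z_iY_1=X_iP=X_i$ and $Y_1Z_i^*=PX_i^*=X_i^*$, and moreover $\sum_{i=2}^m Z_i^*Z_i=Y_1^{-1}Y_1^2Y_1^{-1}=P$. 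These identities are the heart of the argument; nothing here requires $Z_i\in\Ae$, only $Y_1\in\Ae$.

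Finally, put
\[
N_j:=\sum_{i=2}^m Z_i^*M^i_jZ_i+\tfrac1n(I-P),\qquad j=1,\dots,n.
\]
Each $N_j$ is positive, and $\sum_j N_j=\sum_{i=2}^m Z_i^*Z_i+(I-P)=P+(I-P)=I$, so $N\in\Me(\Ha,n)$. Using $Y_1(I-P)Y_1=0$ and the identities above,
\[
Y_1^*N_jY_1=\sum_{i=2}^m(Y_1Z_i^*)M^i_j(Z_iY_1)=\sum_{i=2}^m X_i^*M^i_jX_i,
\]
which gives $M=X_1^*M^1X_1+Y_1^*NY_1$. For properness, if every $X_i$ is invertible in $\Ae$, then each $X_i^*X_i$ is a positive invertible element of $\Ae$, and so is their sum; hence $Y_1$ is invertible in $\Ae$ along with $X_1$.

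The only mildly delicate point is the support argument that ensures $X_iP=X_i$; everything else is a bookkeeping verification. Note that $N$ itself need not lie in any distinguished subalgebra, which is what makes the construction possible without enlarging $\Ae$.
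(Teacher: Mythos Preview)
Your argument is correct and is precisely the standard reduction the paper has in mind (it cites Loebl--Paulsen rather than spelling it out): take $Y_1=(\sum_{i\ge 2}X_i^*X_i)^{1/2}\in\Ae$, pull the tail through the pseudoinverse of $Y_1$, and pad on $\ker Y_1$ to obtain $N\in\Me(\Ha,n)$. The support observation $X_iP=X_i$ and the invertibility check for properness are handled exactly as needed.
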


%
%

\begin{coro} $M\in \Me(\Ha,n)$ is $\Ae$-extremal if and only if whenever 
$M=XM^1X+YM^2Y$ for some positive invertible $X,Y\in \Ae$, $X^2+Y^2=I$ and $M^1,M^2\in \Me(\Ha,n)$, we must have $M^i\sim_\Ae M $.

\end{coro}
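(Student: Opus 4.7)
The forward direction is essentially a definitional check: if $X,Y\in\Ae$ are positive invertible with $X^2+Y^2=I$, then writing $X=X^*$ and $Y=Y^*$ exhibits $M=XM^1X+YM^2Y$ as a proper two-term $\Ae$-convex combination of $M^1$ and $M^2$, so $\Ae$-extremality of $M$ immediately gives $M^i\sim_\Ae M$.

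For the converse, the plan is to reduce an arbitrary proper $\Ae$-convex combination $M=\sum_{i=1}^m X_i^*M^iX_i$ first to two terms and then to positive coefficients. Fixing an index $k$, I would relabel so that $M^k$ appears first and invoke Lemma \ref{lemma:nto2} to obtain a proper two-term $\Ae$-convex combination
\[
M=X_k^*M^kX_k+Z^*NZ,\qquad X_k^*X_k+Z^*Z=I,
\]
with $N\in\Me(\Ha,n)$ and $Z\in\Ae$ invertible. The key step is then to convert the coefficients to positive invertible ones via polar decomposition: since $X_k$ and $Z$ are invertible in the C*-algebra $\Ae$, continuous functional calculus places $|X_k|=(X_k^*X_k)^{1/2}$, $|Z|=(Z^*Z)^{1/2}$ and their inverses in $\Ae$, so that the unitaries $U:=X_k|X_k|^{-1}$ and $V:=Z|Z|^{-1}$ also lie in $\Ae$. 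Substituting $X_k=U|X_k|$ and $Z=V|Z|$ yields
\[
M=|X_k|\bigl(U^*M^kU\bigr)|X_k|+|Z|\bigl(V^*NV\bigr)|Z|,\qquad |X_k|^2+|Z|^2=I,
\]
which is a decomposition of the form appearing in the hypothesis. The hypothesis therefore gives $U^*M^kU\sim_\Ae M$, and since $M^k\sim_\Ae U^*M^kU$ by definition, transitivity yields $M^k\sim_\Ae M$. Running the same argument for every $k$ proves $\Ae$-extremality.

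I do not anticipate any real obstacle; the only point requiring care is that the unitary factor in the polar decomposition of an invertible element of $\Ae$ actually lies in $\Ae$, which is precisely why the notion of \emph{proper} $\Ae$-convex combination (all $X_i$ invertible) is used. Note that Lemma \ref{lemma:nto2} preserves this property, so after the reduction the coefficient $Z$ is indeed invertible and the polar decomposition step goes through.
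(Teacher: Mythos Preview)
Your proposal is correct and follows exactly the route the paper has in mind: the paper does not spell out a proof of this corollary but simply says it is proved ``similarly as e.g.\ in [Loebl--Paulsen]'' from Lemma~\ref{lemma:nto2}, and the standard argument there is precisely your reduction to two terms via Lemma~\ref{lemma:nto2} followed by polar decomposition to replace the coefficients by positive invertible ones in $\Ae$. Your remark that invertibility of $X_k$ and $Z$ in the unital C*-subalgebra $\Ae$ forces the unitary parts $U,V$ to lie in $\Ae$ is exactly the point that makes this work.
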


\noindent{\it Proof of Lemma \ref{lemma:simored_face}.} Here we use similar techniques as in \cite{famo1993Cstar}. We first prove the assertion for $M=X_1N^1X_1+X_2N^2X_2$, with $X_1,X_2\ge 0$. If both $X_1$ and $X_2$ are invertible, $N^1,N^2\in \Fe$ by the definition of an $\Ae$-face. So suppose that, say, $X_1$ is not invertible.
For any  $\lambda\in (0,1)$, 
\[
M=X_1N^1X_1+ (\lambda X_2)N^2(\lambda X_2)+(\sqrt{1-\lambda^2} X_2)N^2(\sqrt{1-\lambda^2} X_2).
\]
Put $Y_\lambda=X_1^2 + (1-\lambda^2)X_2^2=I-\lambda^2X_2^2$ and 
\[
N^\lambda:=Y_\lambda^{-1/2}[X_1N^1X_1+(\sqrt{1-\lambda^2} X_2)N^2(\sqrt{1-\lambda^2} X_2)]Y_\lambda^{-1/2},
\]
note that  since $X_2^2\le I$, $Y_\lambda$ is invertible. We then have $N^\lambda\in \Me(\Ha,n)$ and 
\[
M=(\lambda X_2)N^2(\lambda X_2)+ Y_\lambda^{1/2}N^\lambda Y_\lambda^{1/2}.
\]
If $X_2$ is invertible, this is a proper $\Ae$-convex combination, so that  
$N^2,N^\lambda\in \Fe$. If $X_2$ is not invertible, we repeat the same construction, with $Y_\lambda^{1/2}N^\lambda Y_\lambda^{1/2}$ as the second element. Since $Y_\lambda$ is invertible, we obtain $M$ as a proper convex combination of $N^\lambda$ and some element $K^\lambda\in \Me(\Ha,n)$, so that again $N^\lambda\in \Fe$, and this holds for any $\lambda\in (0,1)$.
Since $\Fe$ is compact, there is some sequence $\lambda_n\to 1$, such that 
$N^{\lambda_n}$ converges to some element $N\in \Fe$.  
It follows that 
\[
\lim_n N^{\lambda_n}=Q_1N^1Q_1+Q_1^\perp N^2Q_1^\perp\in \Fe.
\]

Let now $A=\sum_{j=1}^m X_j^*N^jX_j$ and choose any $j\in \{1,\dots,m\}$. Then by Lemma \ref{lemma:nto2}, there is some $K^j\in \Me(\Ha,n)$ and $Y_j\in \Ae$
 such that $M=X_j^*N^jX_j+Y_j^*K^jY_j$ is an $\Ae$-convex combination. Let $X_j=U_j|X_j|$, $Y_j=V_j|Y_j|$ be polar decompositions, with $U_j$, $V_j$ unitary elements in $\Ae$. Then 
$M=|X_j|M^j|X_j|+|Y_j|R^j|Y_j|$, with $M^j=U_j^*N^jU_j$, $R^j=V_j^*K^jV_j\in \Me(\Ha,n)$. By the first part of the proof, for $P_j$ the support of $|X_j|$, 
\[
P_jM^jP_j+   P_j^\perp R^jP_j^\perp=P_jU^*_jN^jU_jP_j+
P^\perp R^j P_j^\perp\in \Fe
\]
and hence also $Q_jN^jQ_j+Q_j^\perp L^j Q_j^\perp\in \Fe$, where $L^j=U_j^*R^jU_j\in \Me(\Ha,n)$ and $Q_j=U_jP_jU^*_j$ is the range projection of $X_j$.

\hfill $\qed$

\noindent{\it Proof of Lemma \ref{lemma:char_extreme}.}
Let $M$ be $\Ae$-extremal and let $0\le D_i\le M_i$, $\sum_iD_i=:D\in \Ae$.
Let $Q$ be the support projection of $D$ and let 
\[
N_i=D^{-1/2}D_iD^{-1/2}+\tfrac 1n (I-Q), \quad i=1,\dots,n. 
\]
Then $N=\{N_1,\dots,N_n\}$
 is an element in $\Me(\Ha,n)$ and  $D_i=D^{1/2}N_iD^{1/2}$. Similarly, let $D'_i=M_i-D_i$, then $0\le D_i'\le M_i$, $\sum_i D_i'=I-D\in \Ae^+$ and there is some $N'\in \Me(\Ha,n)$ such that $D_i'=(I-D)^{1/2}N_i'(I-D)^{1/2}$. It follows that 
\begin{equation}\label{eq:char_proof}
M_i=D_i+D_i'=D^{1/2}N_iD^{1/2}+(I-D)^{1/2}N_i'(I-D)^{1/2},
\end{equation}
so that $M$ is an $\Ae$-convex combination of $N$ and $N'$. Using Lemmas \ref{lemma:extreme_face} and \ref{lemma:simored_face}, we obtain $M\sim_\Ae QNQ+Q^\perp N'Q^\perp$, so that there is some unitary $U\in \Ae$ such that 
\[
D_i=D^{1/2}(QN_iQ+Q^\perp N_i'Q^\perp)D^{1/2}=D^{1/2}U^*M_iUD^{1/2}=X^*M_iX,
\]
with $X=UD^{1/2}\in \Ae$. 

Conversely, assume that the condition holds and let $M=\sum_j Y_j^*M^jY_j$ be a proper $\Ae$-convex combination. Then for all  $i,j$, $0\le Y_j^*M_i^jY_j\le M_i$, hence there is some $X_j\in \Ae$ such that 
$Y_j^*M_i^jY_j=X_j^*M_iX_j$. Summing over $i$, we obtain $Y_j^*Y_j=X_j^*X_j$, so that $Y_j=U_jX_j$ for some unitary $U_j\in \Ae$ 
 and since $Y_j$ is invertible, so is $X_j$. Hence $U^*_jM^jU_j=M$, for all $j$, and $M$ is $\Ae$-extremal.

\hfill $\qed$

\begin{lemma}\label{lemma:fixedp} Let $\Phi:B(\Ha)\to B(\Ha)$ be a unital completely positive map. Let $\Fe_\Phi:=\{X\in B(\Ha), \Phi(X)=X\}$ and let $\Fe_\Phi'$ be the commutant of $\Fe_\Phi$.  If $\Fe_\Phi'=\mathbb CI$, then $\Phi=id_\Ha$ is the identity map on $\Ha$.

\end{lemma}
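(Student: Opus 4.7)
The plan is to show that the fixed-point set $\Fe_\Phi$ is a C*-subalgebra of $B(\Ha)$ under the hypothesis $\Fe_\Phi'=\mathbb C I$. Combined with the finite-dimensional bicommutant identity $\Fe_\Phi=\Fe_\Phi''$, this gives
\[
\Fe_\Phi \;=\; (\Fe_\Phi')' \;=\; (\mathbb C I)' \;=\; B(\Ha),
\]
whence $\Phi=id_\Ha$. Since $\Phi$ is unital, the Hilbert--Schmidt adjoint $\Phi^*$ is trace-preserving and completely positive, so it maps the state space into itself and admits invariant states; let $\omega$ be one of maximal support (for instance $\omega=E^*(I/d_\Ha)$ with $E=\lim_N \tfrac1N\sum_{k=0}^{N-1}\Phi^k$), and let $p$ be its support projection. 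Positivity of $\Phi$, the invariance $\omega(\Phi(p^\perp))=\omega(p^\perp)=0$ and faithfulness of $\omega$ on $pB(\Ha)p$ give $p\Phi(p^\perp)p=0$, equivalently $\Phi(p^\perp)\le p^\perp$. For any $X\in\Fe_\Phi$, the Kadison--Schwarz inequality $\Phi(X^*X)\ge X^*X$ combined with $\omega(\Phi(X^*X)-X^*X)=0$ and faithfulness on $pB(\Ha)p$ forces $(\Phi(X^*X)-X^*X)p=0$, and symmetrically $(\Phi(XX^*)-XX^*)p=0$.

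The technical heart is to conclude from these identities that $p\in\Fe_\Phi'$. Using the minimal Stinespring dilation $\Phi(Y)=V^*\pi(Y)V$, the Kadison gap rewrites as $\Phi(X^*X)-X^*X=T^*T$ with $T=(I-VV^*)\pi(X)V$, so the condition $(\Phi(X^*X)-X^*X)p=0$ is equivalent to $Tp=0$, yielding $\pi(X)Vp\xi=V(Xp\xi)$ for all $\xi\in\Ha$ and $X\in\Fe_\Phi$. To upgrade this to $Xp(\Ha)\subseteq p\Ha$, I would consider the positive trace-class operator $X\omega X^*$ and form its Cesaro average
\[
\bar\rho \;:=\; \lim_{N\to\infty}\frac{1}{N}\sum_{k=0}^{N-1}(\Phi^*)^k(X\omega X^*),
\]
which is positive and $\Phi^*$-invariant. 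By maximality of $\supp\omega=p$, the support of $\bar\rho$ is contained in $p$; expressing $\bar\rho$ through $E$ as $\bar\rho(Y)=\omega(X^*E(Y)X)$ and reading off this containment, using that $E$ respects the block decomposition along $p,p^\perp$ (because $\Phi(Yp^\perp)p=0$ for all $Y$ and $\Phi$ preserves $p^\perp B(\Ha)p^\perp$), forces $p^\perp Xp=0$. Applying the same reasoning to $X^*\in\Fe_\Phi$ yields $pXp^\perp=0$, hence $[p,X]=0$ for every $X\in\Fe_\Phi$.

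Thus $p\in\Fe_\Phi'=\mathbb C I$, and since $p$ is a nonzero projection $p=I$, so $\omega$ is faithful on $B(\Ha)$. With $\omega$ faithful, $\omega(\Phi(X^*X)-X^*X)=0$ and positivity of the difference now force $\Phi(X^*X)=X^*X$, so $X^*X\in\Fe_\Phi$, and symmetrically $XX^*\in\Fe_\Phi$. Polarization combined with closure under involution (which holds since $\Phi(X^*)=\Phi(X)^*$) shows $\Fe_\Phi$ is closed under arbitrary products, hence is a unital C*-subalgebra of $B(\Ha)$, and the bicommutant identity concludes as above. The main obstacle is the commutation step $p\in\Fe_\Phi'$: the Kadison identity on $p$ combined with the Stinespring trick readily yields $\pi(X)V(p\Ha)\subseteq V\Ha$, but sharpening this to $\pi(X)V(p\Ha)\subseteq V(p\Ha)$---equivalently $p^\perp Xp=0$---requires the auxiliary Cesaro-averaged invariant operator $\bar\rho$ and critically uses the maximality of $\supp\omega$.
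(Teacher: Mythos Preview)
Your overall strategy coincides with the paper's: show that $\Fe_\Phi$ is a $*$-subalgebra (hence equals its bicommutant $B(\Ha)$) by proving $X^*X\in\Fe_\Phi$ whenever $X\in\Fe_\Phi$, via Kadison--Schwarz together with faithfulness of a suitable invariant object. The paper forms the Cesaro limit $\Psi=E$, observes that it is a unital completely positive idempotent, and cites \cite[Lemma~1]{arveson1972II} to conclude that the support projection $Q$ of $\Psi$ lies in $\Fe_\Psi'\subseteq\Fe_\Phi'$; the hypothesis forces $Q=I$, so $\Psi$ is faithful and $\Psi(\Phi(X^*X)-X^*X)=0$ finishes. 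Your projection $p=\supp\omega$ with $\omega=E^*(I/d_\Ha)$ is in fact the \emph{same} projection: $\Tr(\omega\,p^\perp)=d_\Ha^{-1}\Tr E(p^\perp)=0$ forces $E(p^\perp)=0$, so $p$ is precisely the support of $E$. Thus you are not taking a different route; you are attempting to reprove Arveson's lemma by hand.

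The gap is in your ``technical heart''. You correctly deduce $\supp\bar\rho\le p$ for $\bar\rho=E^*(X\omega X^*)$, but this unpacks to $\omega\bigl(X^*E(p^\perp)X\bigr)=0$, and since $E(p^\perp)=0$ --- indeed $E$ annihilates all of $p^\perp B(\Ha)p^\perp$, because $0\le B\le\|B\|p^\perp$ gives $0\le E(B)\le\|B\|E(p^\perp)=0$ --- the equation is vacuous and cannot yield $p^\perp Xp=0$. Your justification that ``$E$ respects the block decomposition'' is also too strong: the facts $\Phi(Yp^\perp)p=0$ and $\Phi(p^\perp B(\Ha)p^\perp)\subseteq p^\perp B(\Ha)p^\perp$ are correct, but they only show that $\Phi$ (hence $E$) preserves $B(\Ha)p^\perp$ and $p^\perp B(\Ha)$; they do \emph{not} show that $\Phi$ preserves $pB(\Ha)p$, which is essentially the commutation $[p,X]=0$ you are trying to establish. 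So the $\bar\rho$ argument, as written, is circular. The paper sidesteps all of this by invoking Arveson's result that the support of a unital CP idempotent commutes with its range; if you want a self-contained proof you must supply a genuine argument for that fact, and the Cesaro-averaged state does not provide one.
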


\begin{proof}  Clearly,  it is enough to prove that $\Fe_\Phi$ is a subalgebra, equivalently, $X^*X\in \Fe_\Phi$ for any $X\in \Fe_\Phi$. In proving this, we follow
 \cite[Remark 2]{arveson1972II}.

Let $\Psi$ be the pointwise limit of $n^{-1}\sum_{k=0}^{n-1}\Phi^k$, where $\Phi^k=\Phi\circ\dots\circ\Phi$ is the $k$-fold composition. Then $\Psi$ is an idempotent unital completely positive map and $\Psi\circ\Phi=\Phi\circ\Psi=\Psi$, $\Fe_\Phi\subseteq \Fe_\Psi$. 
Let $Q$ be the support projection of $\Psi$, then by \cite[Lemma 1]{arveson1972II}, $Q\in \Fe_\Psi'\subseteq \Fe_\Phi'$, so that $Q=I$ and
 $\Psi$ is faithful.
 If $X\in \Fe_\Phi$, then by Schwarz inequality, $X^*X=\Phi(X)^*\Phi(X)\le \Phi(X^*X)$. Since $\Phi(X^*X)-X^*X\ge 0$ and 
 $\Psi(\Phi(X^*X)-X^*X)=\Psi(X^*X)-\Psi(X^*X)=0$, we have $X^*X\in \Fe_\Phi$.

\end{proof}

\noindent{\it Proof of Lemma \ref{lemma:char_pure}.}
Suppose $M$ is  $\Ae$-pure and let $D_i\in B(P_i\Ha)$ be such that $D:=\sum_i D_i\in \Ae$. 
We may suppose that $D_i$ is self-adjoint, by replacing  $D_i$ by $\tfrac12(D_i+D_i^*)$ if necessary.
Moreover, there are some $t,s\ge 0$ such that $0\le tM_i+sD_i\le M_i$ for all $i$ and since  $\sum_i(tM_i+sD_i)=tI+sD\in \Ae$, we may suppose $0\le D_i\le M_i$. 

By Lemma \ref{lemma:char_extreme}, there are some $X,Y\in \Ae$ such that $D_i=X^*M_iX$ and 
$M_i-D_i=Y^*M_iY$ for all $i$. Summing over $i$, we obtain $X^*X+Y^*Y=I$. Let $\Phi(A)=X^*AX+Y^*AY$ for $A\in B(\Ha)$, then $\Phi$ is a unital completely positive map and $M_i\in \Fe_\Phi$ for all $i$. Clearly, $\Ae'\subseteq \Fe_\Phi$, so that 
$\Fe_\Phi'\subseteq \Ae''=\Ae$. Let $Q\in \Fe_\Phi'$ be a projection, then $Q$ must commute with all $M_i$. Since $Q\in \Ae$ and $M$ is $\Ae$-pure, this implies $Q=0$ or $I$. Hence $\Fe'_\Phi=\mathbb CI$ and $\Phi=id_\Ha$, by Lemma \ref{lemma:fixedp}.
Since $X$ and $Y$ are Kraus operators of $\Phi$, we must have $X=z I$ for some $z\in \mathbb C$, $|z|\le 1$. Thus $D_i=X^*M_iX=|z|^2M_i$, $i=1,\dots, n$.

Conversely, suppose that the condition holds, then $M$ is $\Ae$-extremal by Lemma \ref{lemma:char_extreme}. If $0\ne P\in \Ae$ is a projection commuting with all $M_i$, then $PM_i\in B(P_i\Ha)$ and the condition implies that $PM_i=\lambda M_i$ for all $i$
and some $\lambda\in [0,1]$. Hence $P=I$ and $M$ is $\Ae$-pure.

\hfill $\qed$


\begin{thebibliography}{10}

\bibitem{arveson1969I}
W.B. Arveson.
\newblock {Subalgebras of C*-algebras}.
\newblock {\em {Acta Math.}}, {123}:{141--224}, {1969}.

\bibitem{arveson1972II}
W.B. Arveson.
\newblock {Subalgebras of C*-algebras II}.
\newblock {\em {Acta Math.}}, {128}:{271--308}, {1972}.

\bibitem{bdps2011extremal}
A.~Bisio, P.~Perinotti G.~M.~D'Ariano, and M.~Sedl\'ak.
\newblock {Extremal quantum protocols}.
\newblock {\em {J. Math. Phys.}}, {52}:{082202}, {2011}.

\bibitem{cdp2008memory}
G.~Chiribella, G.~M. D'Ariano, and P.~Perinotti.
\newblock {Memory effects in quantum channel discrimination}.
\newblock {\em {Phys. Rev. Lett.}}, {101}:{180501}, {2008}.

\bibitem{cdp2009framework}
G.~Chiribella, G.~M. D'Ariano, and P.~Perinotti.
\newblock {Theoretical framework for quantum networks}.
\newblock {\em {Phys. Rev. A}}, {80}:{022339}, {2009}.

\bibitem{dlp2005classical}
G.~M. D'Ariano, P.~Lo Presti, and P.~Perinotti.
\newblock {Classical randomness in quantum measurements}.
\newblock {\em {Journal of Physics A}}, {38}:{5979--5991}, {2005}.

\bibitem{famo1993Cstar}
D.~Farenick and F.~B. Morenz.
\newblock {C*-extreme points of some compact C*-convex sets}.
\newblock {\em {Proc. Amer. Math. Soc.}}, {118}:{765--775}, {1993}.

\bibitem{famo1997Cstar}
D.~Farenick and F.~B. Morenz.
\newblock {C*-extreme points in the generalized state spaces of a C*-algebra}.
\newblock {\em {Trans. Amer. Math. Soc.}}, {349}:{1725--1748}, {1997}.

\bibitem{fazh1998thestructure}
D.~Farenick and H.~Zhou.
\newblock {The structure of C*-extreme points in spaces of completely positive
  linear maps on C*-algebras}.
\newblock {\em {Proc. Amer. Math. Soc.}}, {126}:{1467--1477}, {1998}.

\bibitem{guwa2007games}
G.~Gutoski and J.~Watrous.
\newblock {Toward a General Theory of Quantum Games}.
\newblock In {\em {STOC 07: Proceedings of the 39th annual ACM Symposium on
  Theory of Computing}}, {Annual ACM Symposium on Theory of Computing}, pages
  {565--574}. {2007}.

\bibitem{hmp1981Cstar}
A.~Hoppenwaser, R.L. Moore, and V.~I. Paulsen.
\newblock {C*-extreme points}.
\newblock {\em {Trans. Amer. Math. Soc.}}, {266}:{291--307}, {1981}.

\bibitem{jencova2012extremalityJMP}
A.~Jen{\v c}ov{\'a}.
\newblock {Extremality conditions for generalized channels}.
\newblock {\em {J. Math. Phys.}}, {53}:{122203}, {2012}.

\bibitem{jencova2012generalized}
A.~Jen\v{c}ov\'a.
\newblock {Generalized channels: Channels for convex subsets of the state
  space}.
\newblock {\em {J. Math. Phys.}}, {53}:{012201}, {2012}.

\bibitem{jencova2014base}
A.~Jen\v{c}ov\'a.
\newblock Base norms and discrimination of generalized quantum channels.
\newblock {\em J. Math. Phys.}, 55:022201, 2014.

\bibitem{lopa1981some}
R.I. Loebl and V.~I. Paulsen.
\newblock {Some remarks on C*-convexity}.
\newblock {\em {Lin. Alg. Appl.}}, {35}:{63--78}, {1981}.

\bibitem{morenz1994thestructure}
F.~B. Morenz.
\newblock {The structure of C*-convex sets}.
\newblock {\em {Canad. J. Math.}}, {46}:{1007--1026}, {1994}.

\bibitem{partha1999extremal}
K.~R. Parthasarathy.
\newblock {Extremal decision rules in quantum hypothesis testing}.
\newblock {\em { Infin. Dimens. Anal. Quantum. Probab. Relat. Top.}},
  {02}:{557--568}, {1999}.

\bibitem{paulsen2002completely}
V.~Paulsen.
\newblock {\em {Completely Bounded Maps and Operator Algebras}}.
\newblock {Cambridge University Press}, {2002}.

\bibitem{pjsz2015exploring}
Z.~Pucha{\l}a, A.~Jen\v cov\'a, M.~Sedl\'ak, and M.~Ziman.
\newblock {Exploring boundaries of quantum convex structures: Special role of
  unitary processes}.
\newblock {\em {Phys. Rev. A}}, {92}:{012304}, {2015}.

\bibitem{stormer1974positive}
E.~St\"ormer.
\newblock {Positive linear maps on C*-algebras}.
\newblock In A.~Hartk\"amper and H.~Neumann, editors, {\em {Foundations of
  quantum mechanics and ordered linear spaces}}, pages 85--106. Springer Berlin
  Heidelberg, 1974.

\bibitem{ziman2008PPOVM}
M.~Ziman.
\newblock {Process positive-operator-valued measure: A mathematical framework
  for the description of process tomography experiments}.
\newblock {\em {Phys. Rev. A}}, {77}:{062112}, {2008}.

\end{thebibliography}

\end{document}